%% changes after arXiv 1:
%% FWF number added
%% date fixed (\today)
%%
%% last change: 26.09.2012, 14:33 - Gandalf

%%===================================================
\documentclass[a4paper,11pt]{article}
\usepackage{oldgerm}
\usepackage{color}
\usepackage{bbm}
\usepackage{amssymb}
\usepackage{epsfig}
\usepackage{amsthm}
\usepackage{amsmath, mathrsfs,psfrag}

\renewcommand{\mathbb}[1]{\mathbbm{#1}}

%%==============================
%% sets of numbers
\newcommand{\Cl}{\mathbbm{C}}
\newcommand{\Rl}{\mathbb{R}}
\newcommand{\Nl}{\mathbb{N}}

%%==============================

% Notiz auf dem Rand

\definecolor{lightgray}{rgb}{0.8,0.8,0.8}

%%==============================
%% greek letters
\newcommand{\Om}{\Omega}
\newcommand{\om}{\omega}

\newcommand{\te}{\theta}

\newcommand{\la}{\lambda}

\newcommand{\eps}{\varepsilon}

%%==============================

%%==============================
%% mathcal letters

\newcommand{\R}{\mathcal{R}}
\newcommand{\B}{\mathcal{B}}

\newcommand{\M}{\mathcal{M}}

\newcommand{\NN}{\mathcal{N}}

\newcommand{\Hil}{\mathcal{H}}

\newcommand{\U}{\mathcal{U}}
\newcommand{\DD}{\mathcal{D}}

%%==============================

%%==============================
%% mathscript letters

\newcommand{\Ss}{\mathscr{S}}   % Schwartz space

%%==============================

%%==============================
%% dressed letters

\newcommand{\fti}{\tilde{f}}

%%==============================

%%==============================
%% groups

   % Poincare group

   % prop. orth. Poincare group

\newcommand{\PG}{\mathcal{P}}
%%==============================

%%==============================
%% gothic letters
\newcommand{\frS}{\textfrak{S}}

%%==============================

%%==============================
%% bold mathematics

%%-------------------------------------------
%% bold greek

%%-------------------------------------------
%% small bold

%%-------------------------------------------
%% small bold greek

%%-------------------------------------------
%% special stuff

%%==============================

%%==============================
%% roman letters

\newcommand{\supp}{\text{supp}\,}

\newcommand{\Strip}{\mathrm{S}}
%%==============================

\DeclareMathOperator{\im}{Im}

%%==============================
%% symbols
\newcommand{\ot}{\otimes}

    % tensorial power

%%==============================

%%==============================
%% small math for images

%%==============================

%%==============================
%% unused, obsolete stuff

%%-------------------------------------------
%% Z- und A- symbols

\newcommand{\ad}{a^{\dagger}}
%%==============================

%%==============================
%% Symbol spaces

%%==============================
%% scalar product

%\input{layout}
%%=================================
%% load packages
\usepackage{amsmath}
\usepackage{amsfonts}
\usepackage{amssymb}
\usepackage{mathrsfs}
\usepackage[pdfborder={0 0 0}]{hyperref}
\usepackage{color}
\usepackage[T1]{fontenc}
\usepackage{lmodern}

%%=================================

%%=================================
%% define environments
\newtheorem{theorem}{Theorem}[section]
\newtheorem{proposition}[theorem]{Proposition}
\newtheorem{lemma}[theorem]{Lemma}

\newtheorem{definition}[theorem]{Definition}

\numberwithin{equation}{section}
%%=================================

%%=================================
%% counter setup

%%=================================

%%=================================
%% page layout
\setlength{\unitlength}{1mm}

\newlength{\dinwidth}
\newlength{\dinmargin}

\setlength{\dinwidth}{21.0cm}
\setlength{\textwidth}{15.7cm} %14.7
\setlength{\textheight}{23.0cm} %23.0

\setlength{\dinmargin}{\dinwidth}
\addtolength{\dinmargin}{-\textwidth}
\setlength{\dinmargin}{0.5\dinmargin}
\setlength{\oddsidemargin}{-1.0in}
\addtolength{\oddsidemargin}{\dinmargin}
\setlength{\evensidemargin}{\oddsidemargin}
\setlength{\marginparwidth}{0.9\dinmargin}
\setlength{\marginparsep}{8pt}
\setlength{\marginparpush}{5pt}

\setlength{\columnseprule}{0mm}
\setlength{\columnsep}{7mm}

\setlength{\topmargin}{-0.5in}
\setlength{\headheight}{30pt}
\setlength{\headsep}{10pt}

\setlength{\footskip}{20pt}
%%=================================

%%===================================================

\usepackage{paralist}
\newenvironment{definitionlist}{\begin{compactenum}[\itshape i)]}{\end{compactenum}}
\newcommand{\refitem}[1] {\textit{\ref{#1})}}

%%== Jan's Macros =======================================
\DeclareMathOperator{\Symm}{Symm}

%%===================================================

\begin{document}

%%== Title page =========================================
\title{\bf On the Equivalence of Two Deformation Schemes in Quantum Field Theory}
\author{
	Gandalf Lechner
	\footnote{Institute for Theoretical Physics, University of Leipzig, 04104 Leipzig, Germany, email: \texttt{gandalf.lechner@uni-leipzig.de}}
	\qquad
	Jan Schlemmer
	\footnote{Department of Physics, University of Vienna, 1090 Vienna, Austria, email: \texttt{jan.schlemmer@univie.ac.at}
	GL and JS supported by FWF project P22929-N16 ``Deformations of quantum field theories''}
	\qquad
	Yoh Tanimoto
	\footnote{Institute for Theoretical Physics, University of G\"ottingen, 37077 G\"ottingen, Germany, email: \texttt{yoh.tanimoto@theorie.physik.uni-goettingen.de}.	Supported by Deutscher Akademischer Austauschdienst.
	}
	}
\date{September 11, 2012}

\maketitle

\begin{abstract}
Two recent deformation schemes for quantum field theories on two-dimensional Minkowski space, making use of deformed field operators and Longo-Witten endomorphisms, respectively, are shown to be equivalent.
\end{abstract}
%%===================================================

%% For the LMP Version:
\noindent{\small {\bf Keywords:} deformations of quantum field theories, two-dimensional models, modular theory\\
{\bf MSC (2010):} 81T05, 81T40}

%%== Contents =========================================
%	\input{introduction}
%%
%% last change: 11.09.2012, 9:25 - Jan
%% 11.09.: often times -> often, betwwen -> between, testfunction -> test function.
%% 05.09.: Label for action of a_R(\xi) on fin.-part vector, changed
%%         announcement of sec. 3
%% 04.09., later: small correction for 1P transl.-action in the massless case
%% 04.09.: some small changes in formulation and typography

%%===================================================
\section{Deformations of QFTs by inner functions and their roots}
%%===================================================

In recent years, there has been a lot of interest in deformations of quantum field theories \cite{GrosseLechner:2007,BuchholzSummers:2008,BuchholzLechnerSummers:2010,DybalskiTanimoto:2010,LongoWitten:2010,LongoRehren:2011, Morfa-Morales:2011,Lechner:2011, Tanimoto:2011-1,BischoffTanimoto:2011, Alazzawi:2012,Much:2012, Plaschke:2012} in the sense of specific procedures modifying quantum field theoretic models on Minkowski space, mostly motivated by the desire to construct new models in a non-perturbative manner. Various constructions have been invented, relying on different methods such as smooth group actions, non-commutative geometry, chiral conformal field theory, boundary quantum field theory, and inverse scattering theory.

In many situations, it is possible to set up the deformation in such a way that Poincar\'e covariance is completely preserved and locality partly. More precisely, often the deformation introduces operators which are no longer localized in arbitrarily small regions of spacetime, but rather in unbounded regions like a Rindler wedge $W:=\{x\in\Rl^d\,:\,x_1>|x_0|\}$. In the operator-algebraic framework of quantum field theory \cite{Haag:1996}, such a wedge-local Poincar\'e covariant model can be conveniently described by a so-called Borchers triple $(\M,U,\Om)$ \cite{Borchers:1992,BuchholzLechnerSummers:2010}, consisting of a von Neumann algebra $\M$ of operators localized in the wedge $W$, a suitable representation $U$ of the translations, and an invariant (vacuum) vector $\Om$ (see Def.~\ref{definition:WedgeAlgebra} below). Depending on the method at hand, the von Neumann algebra $\M$ is generated by different objects, like deformed field operators or twisted chiral observables.

It is the aim of this letter to show that some of the constructions on two-dimensional Minkowski space are identical in the sense of unitary equivalence of their associated Borchers triples. More precisely, we will show that the deformations presented in \cite{Tanimoto:2011-1}, starting from a chiral field theory, are equivalent to the deformations in terms of deformed field operators, presented in \cite{Lechner:2011}, for mass $m=0$ and dimension $d=1+1$ (Section~\ref{section:m=0}). In the special case of the so-called warped convolution deformation \cite{BuchholzLechnerSummers:2010}, such an equivalence was already observed in \cite{Tanimoto:2011-1}. Here we prove that also for the infinite family of deformations considered in \cite{Lechner:2011}, one obtains the same construction as in \cite{Tanimoto:2011-1} in the chiral situation, where the deformation amounts to a unitary equivalence transformation by a Longo-Witten endomorphism \cite{LongoWitten:2010} on each light ray. Furthermore, we will show that
certain aspects of the chiral construction carry over to the massive situation (Section~\ref{section:m>0}).

The deformations we are interested in here take certain families of analytic functions as input parameters, whose relations we will now clarify. We will write $\mathbb{H}\subset\Cl$ for the open upper half plane, $\Strip(0,\pi):=\{\zeta\in\Cl\,:\,0<\im\zeta<\pi\}$ for the strip, and $H^\infty(\mathbb H)$, $H^\infty(\Strip(0,\pi))$ for the Hardy spaces of bounded analytic functions on these domains. Recall that for a function $f\in H^\infty(\mathbb H)$, the limit $\lim_{\eps\searrow 0}f(t+i\eps)$ exists almost everywhere\footnote{By ``almost everywhere'' (a.e.) and ``almost all'' we always refer to Lebesgue measure on $\Rl$.} and defines a boundary value function in $L^\infty(\Rl)$. The same holds for functions in $H^\infty(\Strip(0,\pi))$ and their boundary values at $\Rl$ and $\Rl+i\pi$.

%%===================================================
\begin{definition}\label{definition:FunctionFamilies}
	\begin{definitionlist}
		\item\label{item:SymmetricInnerFunctions} A symmetric inner function is a function $\varphi\in H^\infty(\mathbb H)$ whose boundary values on the real line satisfy $\overline{\varphi(t)}=\varphi(t)^{-1}=\varphi(-t)$ for almost all $t\in\Rl$.
		\item\label{item:Roots} A root of a symmetric inner function $\varphi$ is a function $R\in L^\infty(\Rl)$ such that $\overline{R(t)}=R(t)^{-1}=R(-t)$ and $R(t)^2=\varphi(t)$ for almost all $t\in\Rl$. The family of all roots of symmetric inner functions will be denoted $\R$.
		\item\label{item:ScatteringFunctions} A scattering function is a function $S\in H^\infty(\Strip(0,\pi))$ whose boundary values satisfy $\overline{S(\theta)}=S(\theta)^{-1}=S(-\theta)=S(i\pi+\theta)$ for almost all $\te\in\Rl$.
	\end{definitionlist}
\end{definition}
%%===================================================

Symmetric inner functions provide the input into deformations making use of Longo-Witten endomorphisms \cite{LongoWitten:2010,Tanimoto:2011-1, LongoRehren:2011, BischoffTanimoto:2011}, whereas scattering functions are used in inverse scattering approaches such as \cite{Lechner:2003, BostelmannLechnerMorsella:2011}. For convenience, the latter are usually defined with the additional requirement of extending continuously to the closure of $\Strip(0,\pi)$. However, going through the construction, say in \cite{Lechner:2003}, one realizes that this continuity assumption is not necessary. What is required is that the boundary conditions on $S$ hold almost everywhere, the boundary values are regular enough to define multiplication operators on $L^2(\Rl)$, and for $f\in H^\infty(\Strip(0,\pi))$ with Schwartz boundary values, $[0,\pi]\ni\la\mapsto \int_\Rl d\te\,f(\te+i\la)S(\te+i\la)$ is continuous. As this is the case for any $S\in H^\infty(\Strip(0,\pi))$, one can just as well work with the more general definition
of scattering function given above.

We also note that scattering functions and symmetric inner functions are in one to one correspondence by $S(\zeta):=\varphi(\sinh\zeta)$, $\zeta\in \Strip(0,\pi)$. As $\sinh(i\pi+\zeta)=-\sinh\zeta=\sinh(-\zeta)$, this identification produces the required properties of the boundary values in Def.~\ref{definition:FunctionFamilies}~\refitem{item:ScatteringFunctions}. On the other hand, $\varphi(z):=S(\sinh^{-1}z)$, $z\in\mathbb H$, is well-defined and analytic because of the crossing symmetry $S(i\pi+\te)=S(-\te)$ of $S$. This identification of the strip and the half plane via $\sinh$ is the one encountered in massive theories \cite{GrosseLechner:2007}. In massless theories, also the identification $\exp:\Strip(0,\pi)\to\mathbb H$ occurs \cite{LongoWitten:2010}, and under this identification, scattering functions correspond to the subset of symmetric inner functions with the additional symmetry $\overline{\varphi(t)}=\varphi(t^{-1})$ , see \eqref{eq:R0-generalized} below.

Regarding Def.~\ref{definition:FunctionFamilies}~\refitem{item:Roots}, we note that each symmetric inner function has infinitely many different roots, and the family of roots $\R$ contains all symmetric inner functions because Def.~\ref{definition:FunctionFamilies}~\refitem{item:SymmetricInnerFunctions} is stable under taking squares. These roots are the input into the deformations in \cite{Lechner:2011,Alazzawi:2012,Plaschke:2012}, where under additional regularity assumptions, they are called {\em deformation functions}. We note that these additional requirements are only necessary when working on the tensor algebra of test functions \cite{Lechner:2011}, but not when working directly on a representation space such as in \cite{Alazzawi:2012}. In particular, the roots will not be required to be analytic, and also the condition $R(0)=1$ \cite{Lechner:2011}, related to fixing a root of an inner function, will
not be assumed here.
\\
\\
In the following, we will be concerned with deformations of free field theories of mass $m\geq0$ on two-dimensional Minkowski space, and now set up some standard notation for this. We will be working on the Bose Fock space
\begin{align*}
	\Hil
	:=
	\Gamma(\Hil_1)
	\,,\qquad
	\Hil_1
	:=
	L^2(\Rl,\tfrac{dp}{\om_m(p)})
	\,,
	\quad
	\om_m(p):=(m^2+p^2)^{1/2}
	\,.
\end{align*}
Its Fock vacuum will be denoted $\Om$, and we have the usual representation $\Gamma(U_1)$ of the proper Poincar\'e group as the second quantization of
\begin{align}\label{eq:U1}
	[U_1(x,\la)\Psi_1](p)
	&:=
	e^{i(x_0\om_m(p)-x_1p)}
	\,\Psi_1(\la p)
	\,,\qquad
	[U_1(j)\Psi_1](p)
	:=
	\overline{\Psi_1(p)}
	\,,
\end{align}
where $x=(x_0,x_1)\in\Rl^2$ is the translation, $\la\in\Rl$ denotes the boost rapidity parameter, $\la p:=-\sinh\la\cdot \om_m(p)+\cosh\la\cdot p$, and $j(x)=-x$ is the space-time reflection. We will also write $U(x):=\Gamma(U_1(x,0))$ for the translations.

From an operator-algebraic point of view, a wedge-local quantum field theory is equivalent to a Borchers triple.

%%===================================================
\begin{definition}\label{definition:WedgeAlgebra}
	A Borchers triple $(\M,U,\Om)$ on $\Rl^2$ consists of a von Neumann algebra $\M\subset\B(\Hil)$, a strongly continuous unitary positive energy representation $U$ of the translation group $\Rl^2$ on $\Hil$, and a $U$-invariant unit vector $\Om\in\Hil$ such that
	\begin{definitionlist}
		\item $U(x)\M U(x)^{-1}\subset\M$ for any $x\in\overline{W}$,
		\item $\Om$ is cyclic and separating for $\M$.
	\end{definitionlist}
	Two Borchers triples $(\M,U,\Om)$ and $(\tilde{\M},\tilde{U},\tilde{\Om})$ will be called equivalent, written $(\M,U,\Om)\cong(\tilde{\M},\tilde{U},\tilde{\Om})$, if there exists a unitary $V$ such that $V\M V^*=\tilde{\M}$, $VU(x)V^*=\tilde{U}(x)$ for all $x\in\Rl^2$, and $V\Om=\tilde{\Om}$.
\end{definition}
%%===================================================

Recall that by a famous theorem of Borchers \cite{Borchers:1992}, the representation $U$ can be extended to a (anti-) unitary representation $U_\M$ of the proper Poincar\'e group $\PG_+$ with the help of the modular data $J_\M,\Delta_\M$ of $(\M,\Om)$, by
\begin{align}
	U_\M(x,\la)
	:=
	U(x)\Delta_\M^{-\frac{i\la}{2\pi}}
	\,,\qquad
	U_\M(j)
	:=
	J_\M
	\,.
\end{align}
As is well known, a Borchers triple gives rise to a Poincar\'e-covariant net of wedge algebras \cite{Borchers:1992}, which can under further conditions be extended to a net of double cone algebras \cite{BuchholzLechner:2004,Lechner:2008}. We will not discuss the extension question here, but rather focus on the wedge-local aspects only. Note that the net of wedge algebras generated from a Borchers triple $(\M,U,\Om)$ will transform covariantly under a representation $U_\M$ of the Poincar\'e group which depends on $\M$. However, in the case of two equivalent Borchers triples $(\M,U,\Om)\cong(\tilde{\M},\tilde{U},\tilde{\Om})$, modular theory tells us that the modular data of $(\M,\Om)$ and $(\tilde{\M},\tilde{\Om})$ are related by $VJ_\M V^*=J_{\tilde{\M}}$, $V\Delta_\M^{it}V^*=\Delta_{\tilde{\M}}^{it}$, i.e. equivalence of Borchers triples implies equivalence of the associated wedge-local nets including their representations $U_\M\cong U_{\tilde{\M}}$ of the proper Lorentz group.
\\
\\
A particular example of a Borchers triple is provided by the model of a  free scalar quantum field: Let $a(\xi)$ and $\ad(\xi):=a(\xi)^*$, $\xi\in\Hil_1$, denote the standard CCR annihilation and creation operators on $\Hil$, and for $f\in\Ss(\Rl^2)$, let
\begin{align}
	\phi_m(f)
	&:=
	\ad(f^+)+a(\overline{f^-})
	\,,\qquad
	f^\pm(p)
	:=
	\fti(\pm\om_m(p),\pm p)\,,
\end{align}
denote the free Klein-Gordon field of mass $m\geq0$ (with $f$ restricted to directional derivatives of test functions in the case $m=0$ because of the well-known infrared divergence in the measure $\frac{dp}{|p|}$). With the wedge algebra
\begin{align}
	\M_m
	:=
	\{e^{i\phi_m(f)}\,:\,f\in\Ss_\Rl(W)\}''
	\,,
\end{align}
the Fock translations $U$ \eqref{eq:U1} and the Fock vacuum $\Om$, we then have a Borchers triple $(\M_m,U,\Om)$. In this case, the modular data of $(\M_m,\Om)$ reproduce the Poincar\'e representation \eqref{eq:U1}, i.e. $U_{\M_m}=\Gamma(U_1)$. For convenience of notation, we will write
\begin{align}
	J:=J_{\M_m}=\Gamma(U_1(j))
	\,,\qquad
	\Delta^{it}:=\Delta^{it}_{\M_m}=\Gamma(U_1(0,-2\pi t))
	\,.
\end{align}

Fixing the representation $U$ of the translations and the vector $\Om$, the algebra $\M_m$ is however by no means the only von Neumann algebra completing $U,\Om$ to a Borchers triple. In the following, we will introduce for each $R\in\R$ two von Neumann algebras $\M_{R,m}$, $\NN_R$ with this property, obtained by (generalizations of) the deformation procedures in \cite{Lechner:2011} and \cite{Tanimoto:2011-1}, respectively. For $R=1$, both families reduce to the undeformed situation, i.e. $\M_{1,m}=\M_m$, $\NN_{1}=\M_0$.

To define the first set of deformed wedge algebras $\M_{R,m}$, we introduce a unitary-valued function $T_{R,m}:\Rl\to\U(\Hil)$ \cite{Lechner:2011}
\begin{align}
	[T_{R,m}(p)\Psi]_n(p_1,\ldots,p_n)
	&:=
	\prod_{k=1}^n
	R_m(p,p_k)
	\cdot
	\Psi_n(p_1,\ldots,p_n)
	\,.
\end{align}
Here the function $R_m\in L^\infty(\Rl^2)$ is in the case of positive mass defined as
\begin{align}\label{eq:R_m}
	R_m(p,q)
	&:=
	R\big(\tfrac{1}{2}(\om_m(q)p-\om_m(p)q)\big)
	\,,\qquad
	m>0\,,
\end{align}
where the factor $\frac{1}{2}$ is a matter of convention. Taking the limit $m\to0$, one observes that the argument $\frac{1}{2}(|q|p-|p|q)$ of $R$ vanishes if $p$ and $q$ have the same sign. As the root $R$ is only defined up to equivalence in $L^\infty(\Rl)$, its value at $0$ is not fixed. We therefore define
\begin{align}\label{eq:R0}
	R_0(p,q)
	:=
	\left\{
		\begin{array}{ccl}
			R(-pq) &;& p>0,\,q<0\\
			R(+pq) &;& p<0,\,q>0\\
			1 &;& p>0,\,q>0\;\;\text{or}\;\; p<0,\,q<0
		\end{array}
	\right.
	\;.
\end{align}
Note that for any mass $m\geq0$, we have for almost all $p,q\in\Rl$
\begin{align}
	R_m(q,p)=R_m(p,q)^{-1}
	\,,\qquad
	m\geq0\,.
\end{align}

The assignment $a_R(p):=a(p)T_{R,m}(p)$ defines an operator-valued distribution for any $R\in\R$, which explicitly acts on a vector $\Psi$ of finite particle number according to, $\xi\in\Hil_1$,
\begin{align}\label{eq:a_R-explicit}
	[a_R(\xi)\Psi]_n(p_1,\ldots,p_n)
	=
	\sqrt{n+1}\int \frac{dq}{\om_m(q)}\,\overline{\xi(q)}\,\prod_{k=1}^n R_m(q,p_k)\Psi_{n+1}(q,p_1,\ldots,p_n)
	\,.
\end{align}
Its adjoint is denoted $\ad_R(\xi):=a_R(\overline{\xi})^*$, and the corresponding deformed field operator is
\begin{align}
	\phi_{R,m}(f)
	:=
	\ad_R(f^+)+a_R(\overline{f^-})
	\,,
	\qquad
	f\in\Ss(\Rl^2)\,.
\end{align}
As $\phi_{R,m}(f)$ is essentially self-adjoint on the subspace of finite particle number for real $f$, one can pass to the generated von Neumann algebra
\begin{align}\label{eq:MRm}
	\M_{R,m}
	:=
	\{e^{i\phi_{R,m}(f)}\,:\,f\in\Ss_\Rl(W)\}''
	\,.
\end{align}
%%===================================================
\begin{theorem}\label{theorem:M}
	Let $R\in\R$ and $m\geq0$. Then $(\M_{R,m},U,\Om)$ is a Borchers triple with modular data $J_{\M_{R,m}}=J$ and  $\Delta^{it}_{\M_{R,m}}=\Delta^{it}$.
\end{theorem}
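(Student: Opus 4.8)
The plan is to verify the two conditions of Definition~\ref{definition:WedgeAlgebra} and then to identify the modular data by comparing Tomita operators with those of the undeformed theory. The covariance is easy: the translation unitaries and the deformation unitaries $T_{R,m}(p)$ both act on the $n$-particle space by multiplication with functions of the momenta, hence commute, so $U(x)\phi_{R,m}(f)U(x)^{-1}=\phi_{R,m}(f(\,\cdot-x))$; for $x\in\overline W$ the shifted test function again has support in $W$, which gives condition~i), while positivity of the energy and $U$-invariance of $\Om$ are inherited from the free field. The same mechanism applies to boosts: writing $p=m\sinh\te$ one finds $\tfrac{1}{2}(\om_m(q)p-\om_m(p)q)=\tfrac{m^2}{2}\sinh(\te_p-\te_q)$ for $m>0$, and for $m=0$ the function $R_0$ in \eqref{eq:R0} is likewise constant along boost orbits; hence $R_m$ depends only on boost-invariant combinations of the momenta, $\Delta^{it}\phi_{R,m}(f)\Delta^{-it}$ is again a deformed field smeared with the boosted test function, and, since boosts map $W$ onto itself, $\Delta^{it}\M_{R,m}\Delta^{-it}=\M_{R,m}$ for all $t$.

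To obtain the separating property, introduce the reflected field $\phi'_{R,m}(g):=J\phi_{R,m}(g^j)J$, $g^j(x):=\overline{g(-x)}$, and the von Neumann algebra $\NN_{R,m}:=\{e^{i\phi'_{R,m}(g)}:g\in\Ss_\Rl(W')\}''$ with $W':=-W$. The key step is the relative commutativity $[\phi_{R,m}(f),\phi'_{R,m}(g)]=0$ whenever $\supp f\subset W$ and $\supp g\subset W'$. One expands this commutator using the Zamolodchikov-type exchange relations obeyed by $a_R$ and $a^\dagger_R$ (with exchange phase $R_m$), and then uses the symmetry properties $\overline{R(t)}=R(t)^{-1}=R(-t)$ together with the analytic continuability of $f^+$ and $g^-$ to the rapidity strips prescribed by the supports of $f$ and $g$; as in \cite{Lechner:2011,Lechner:2003}, the two resulting contributions cancel after a contour shift by $i\pi$. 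This yields $\NN_{R,m}\subset\M_{R,m}'$.

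Because $a_R$ reduces to the undeformed $a$ between the zero- and one-particle sectors, $\phi_{R,m}(f)\Om=\phi_m(f)\Om$, and the only deformation in the vectors $\phi_{R,m}(f_1)\cdots\phi_{R,m}(f_n)\Om$ lies in the unimodular factors $R_m$; a Reeh--Schlieder argument as in \cite{Lechner:2011} shows that the span $\mathcal{D}$ of such vectors, with $f_i\in\Ss_\Rl(W)$, is dense, so $\Om$ is cyclic for $\M_{R,m}$, and the analogous argument applied to $\NN_{R,m}$ makes $\Om$ separating for $\M_{R,m}$; thus $(\M_{R,m},U,\Om)$ is a Borchers triple. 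For the modular data, let $S=\overline{J\Delta^{1/2}}$ and $F=S^{*}$ be the Tomita operators of $(\M_m,\Om)$ and of $(\M_m',\Om)$. The decisive observation is that $\Delta^{it}$ acts on $n$-particle rapidity wave functions by the \emph{simultaneous} shift $\te_k\mapsto\te_k+2\pi t$, which leaves every difference $\te_k-\te_l$ invariant; since the entire $R$-dependence of $\mathcal{D}$ sits in the factors $R_m$, and these depend only on such differences (on boost invariants when $m=0$), the analytic continuation in the boost parameter that implements $\Delta^{1/2}$ does not act on $R$ at all, so that
\[
S\,\phi_{R,m}(f_1)\cdots\phi_{R,m}(f_n)\Om=\phi_{R,m}(f_n)\cdots\phi_{R,m}(f_1)\Om
\]
reduces to the Bisognano--Wichmann property of the free field. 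As $\mathcal{D}$ is $\Delta^{it}$-invariant and contained in $\dom S$, it is a core for $S$; and by the standard lemma on products of self-adjoint operators affiliated with $\M_{R,m}$, the Tomita operator $S_{\M_{R,m}}$ agrees with $S$ on $\mathcal{D}$. Hence $S\subset S_{\M_{R,m}}$, i.e.\ $F_{\M_{R,m}}\subset F$; running the same argument in the opposite wedge for $\NN_{R,m}$ (with $\Delta^{-1/2}$) gives $F\subset F_{\M_{R,m}}$, so $F_{\M_{R,m}}=F$, whence $S_{\M_{R,m}}=S$, and uniqueness of the polar decomposition yields $J_{\M_{R,m}}=J$ and $\Delta^{it}_{\M_{R,m}}=\Delta^{it}$.

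I expect the main obstacle to be the relative wedge-commutativity of the deformed field with its reflection, which requires combining the exchange relations of $a_R,a^\dagger_R$ with the analyticity that encodes the wedge-supports of the test functions. By contrast, the coincidence of the modular data with those of the free field --- the less expected statement --- then follows cleanly from the remark that the deformation factors $R_m$ depend only on boost-invariant combinations of the momenta and are therefore inert under the analytic continuation implementing $\Delta^{1/2}$.
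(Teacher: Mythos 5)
Your proposal is correct and follows essentially the route the paper itself relies on: the paper offers no independent argument but defers to \cite{Lechner:2011} (remarking that $m=0$ is handled by the same proofs), and your sketch --- covariance from the boost-invariance of $R_m$, wedge-locality of $\phi_{R,m}$ against the $J$-reflected field via the exchange relations and a contour shift, Reeh--Schlieder for cyclicity and separation, and the two-sided comparison of Tomita operators using that the $R_m$ factors depend only on rapidity differences --- is precisely the strategy of that reference. The only step you compress is the Bisognano--Wichmann computation itself, where besides the inertness of $R_m$ under $\Delta^{1/2}$ one must still match the complex conjugation effected by $J$ with the reversal of the operator product using $\overline{R(t)}=R(-t)$; this is routine but not literally ``the free-field case''.
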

%%===================================================
For $m>0$, this has been established in \cite{Lechner:2011}, and for $m=0$, one can use essentially the same proofs, so that we do not have to go into details here. In fact, as for $m=0$ the mass shell decomposes into two half-rays which are left invariant by the Lorentz boosts, one can in this case more generally consider {\em three} roots $R,R_1,R_2\in\R$, with the additional requirement $\overline{R_k(t)}=R_k(t^{-1})$ for almost all $t\in\Rl$, $k=1,2$, and put
\begin{align}\label{eq:R0-generalized}
	R_0(p,q)
	:=
	\left\{
		\begin{array}{ccl}
			R(-pq) &;& p>0,\,q<0\\
			R(+pq) &;& p<0,\,q>0\\
			R_1(+\tfrac{p}{q}) &;& p>0,\,q>0\\
			R_2(-\tfrac{p}{q}) &;& p<0,\,q<0
		\end{array}
	\right.
	\;.
\end{align}
Also with this more general definition of $R_0$, the algebra $\M_{R,0}$ completes $U,\Om$ to a Borchers triple. In the terminology of \cite{FendleySaleur:1993}, the functions $R_1$, $R_2$ govern the left-left and right-right ``scattering'' of the model, whereas $R$ determines the left-right (wave) scattering \cite{Bischoff:2012, Buchholz:1977, DybalskiTanimoto:2010}. If $R=1$, the corresponding model is chiral -- this is in particular the case for the short distance scaling limits of the models generated by the massive wedge algebras $\M_{R,m}$, $m>0$. In this context one finds $R=1$, $R_1(t)^2=R_2(t)^2=\varphi(t-t^{-1})$ with some symmetric inner function $\varphi$ \cite{BostelmannLechnerMorsella:2011}. For the purposes of this letter, we will however restrict ourselves to the case $R_1=R_2=1$ \eqref{eq:R0}, which corresponds to the construction in \cite{Tanimoto:2011-1}.
\\
\\
To define the second set of deformed wedge algebras $\NN_R$, one works in the massless case $m=0$, and uses the chiral structure present in this situation. Here the Fock space, the representation $U$, the invariant vector $\Om$, and the wedge algebra $\M_0$ split into two (chiral) factors. With $\Hil_1^\pm:= L^2(\Rl_\pm,\frac{dp}{|p|})$, we have $\Hil_1=\Hil_1^+\oplus\Hil_1^-$ and
\begin{align}
	\Hil
	&\cong
	\Hil^+
	\ot
	\Hil^-
	\,,\qquad
	\Hil^\pm:=\Gamma(\Hil_1^\pm)
	\,,\\
	\Om&\cong\Om_+\ot\Om_-
	\,,
	\\
	U(x)&\cong U_+(x_-)\ot U_-(x_+)
	\,,\qquad
	x_\pm:=x_0\pm x_1
	\,,\\
	\M_0
	&\cong
	\M_{0,+}\ot\M_{0,-}\,,
\end{align}
where $\Om_\pm$ denotes the Fock vacuum in $\Hil^\pm$. The canonical unitary $V:\Hil^+\ot\Hil^-\to\Hil$ realizing the above isomorphisms is recalled in \eqref{eqn_isoexpo}. Note that
\begin{align}
	V^*\Gamma(U_1(x,\la))V
	&=
	\Gamma_+(U_{1,+}(x_-,\la))\ot\Gamma_-(U_{1,-}(x_+,\la))
	\,,\\
	V^*\Gamma(U_1(j))V
	&=
	\Gamma_+(U_{1,+}(j))\ot\Gamma_-(U_{1,-}(j))
	\,,
\end{align}
where $\Gamma_\pm$ denotes second quantization on $\Hil^\pm$, with $(U_{1,\pm}(x_\mp,\la)\Psi_1)(p)=e^{\pm ipx_\mp}\Psi_1(e^{\mp\la}\cdot p)$ and $(U_{1,\pm}(j)\Psi_1)(p)=\overline{\Psi_1(p)}$. For the sake of a concise notation, we will write
\begin{align}
	J_\ot
	&:=
	\Gamma_+(U_{1,+}(j))\ot\Gamma_-(U_{1,-}(j))
	=
	V^*JV
	\,,\\
	\Delta_\ot^{it}
	&:=
	\Gamma_+(U_{1,+}(0,-2\pi t))\ot\Gamma_-(U_{1,-}(0,-2\pi t))
	=
	V^*\Delta^{it}V
	\,.
\end{align}
Given $R\in\R$, one introduces the unitary $S_R\in\U(\Hil^+\ot\Hil^-)$ \cite{Tanimoto:2011-1},
\begin{align}
	\big[ S_R \Psi \big]_{n,n'}(p_1, \ldots, p_n, q_1, \ldots, q_{n'})
	=
	\prod_{\substack{i=1\ldots n\\j=1\ldots n'}} R_0(p_i,q_j) \cdot
	\Psi_{n,n'}(p_1, \ldots, p_n, q_1, \ldots, q_{n'})
	\label{eqn_Saction}
	\,,
\end{align}
and defines the von Neumann algebra
\begin{align}\label{eq:NR0}
	\NN_R
	:=
	(\M_{0,+}\ot1)\vee S_{R^2}(1\ot\M_{0,-})S_{R^2}^*
	\,.
\end{align}
%%===================================================
\begin{theorem}\label{theorem:N}
	Let $R\in\R$. Then $(\NN_R,U_+\ot U_-,\Om_+\ot \Om_-)$ is a Borchers triple with modular data $J_{\NN_R}=S_{R^2}J_\ot$ and  $\Delta^{it}_{\NN_R}=\Delta_\ot^{it}$.
\end{theorem}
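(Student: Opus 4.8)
The plan is to check, in turn, the three defining properties of a Borchers triple for $(\NN_R,U_+\ot U_-,\Om_+\ot\Om_-)$ and then to identify the modular objects, the guiding observation being that $S_{R^2}$ is a momentum-space multiplication operator which is invariant under the modular rescaling. Writing $\Om_\ot:=\Om_+\ot\Om_-$, I would first record the elementary facts: $\abs{R_0(p,q)}=1$ almost everywhere (hence $S_{R^2}$ is unitary); $S_{R^2}\Om_\ot=\Om_\ot$ (empty product in \eqref{eqn_Saction}); $S_{R^2}$ commutes with the translations $U_+\ot U_-$; and, by the precise form of \eqref{eq:R0} — under which every product $p_iq_j$ appearing in \eqref{eqn_Saction} is invariant under $p_i\mapsto e^{2\pi t}p_i,\ q_j\mapsto e^{-2\pi t}q_j$ — also $S_{R^2}$ commutes with $\Delta_\ot^{it}$; finally $J_\ot S_{R^2}J_\ot=S_{R^2}^*$, since $J_\ot$ is complex conjugation in the momentum representation. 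I would also note that $S_{R^2}$ and $S_{R^2}^*$ act as the identity on any vector all of whose particles have the same chirality, i.e.\ on $\Hil^+\ot\Cl\Om_-$ and on $\Cl\Om_+\ot\Hil^-$.

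Granting these, $\NN_R$ is a von Neumann algebra by construction, and wedge-covariance is immediate: for $x\in\overline W$ one has $x_-\le0\le x_+$, so $U_+(x_-)\M_{0,+}U_+(x_-)^*\subseteq\M_{0,+}$ and $U_-(x_+)\M_{0,-}U_-(x_+)^*\subseteq\M_{0,-}$ because $\M_0=\M_{0,+}\ot\M_{0,-}$ completes $U_+\ot U_-,\Om_\ot$ to the undeformed Borchers triple; since $S_{R^2}$ commutes with $U_+\ot U_-$, conjugating $\NN_R$ by $(U_+\ot U_-)(x)$ lands inside $(\M_{0,+}\ot1)\vee S_{R^2}(1\ot\M_{0,-})S_{R^2}^*=\NN_R$. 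Cyclicity of $\Om_\ot$ is equally short: $S_{R^2}(1\ot\M_{0,-})S_{R^2}^*\Om_\ot=\Om_+\ot\M_{0,-}\Om_-$ (using $S_{R^2}^*\Om_\ot=\Om_\ot$ and triviality of $S_{R^2}$ on $\Cl\Om_+\ot\Hil^-$), whose closure is $\Om_+\ot\Hil^-$ by cyclicity of $\Om_-$ for $\M_{0,-}$; applying $\M_{0,+}\ot1\subseteq\NN_R$ to this subspace and using cyclicity of $\Om_+$ for $\M_{0,+}$ then shows $\overline{\NN_R\Om_\ot}\supseteq\overline{\M_{0,+}\Om_+}\ot\Hil^-=\Hil$.

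For the modular data I would take the ansatz $J:=S_{R^2}J_\ot$, $\Delta:=\Delta_\ot$, and verify that it is at least consistent: $J$ is an antiunitary involution fixing $\Om_\ot$, $\Delta>0$ with $\Delta^{it}\Om_\ot=\Om_\ot$ and $J\Delta^{it}J=\Delta^{it}$, one has $\Delta^{it}\NN_R\Delta^{-it}=\NN_R$ (as $\Delta^{it}$ commutes with $S_{R^2}$ and normalizes $\M_{0,\pm}$), and the closed antilinear operator $\tau:=S_{R^2}J_\ot\Delta_\ot^{1/2}$ is an involution whose polar decomposition is $J\cdot\Delta^{1/2}$. By the converse part of Tomita's theorem it then suffices to check $\tau A\Om_\ot=A^*\Om_\ot$ for $A$ in a $*$-subalgebra of $\NN_R$ whose action on $\Om_\ot$ is a core for $\Delta_\ot^{1/2}$, together with the analogous statement for the commutant (which simultaneously gives that $\Om_\ot$ is separating). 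On the two families of generators this is a one-line computation: if $A=B\ot1$ with $B\in\M_{0,+}$, then $A\Om_\ot=B\Om_+\ot\Om_-$ and $\Delta_\ot^{1/2}A\Om_\ot$ lie in $\Hil^+\ot\Cl\Om_-$, on which $S_{R^2}$ is trivial and $J_\ot,\Delta_\ot^{1/2}$ restrict to the modular objects of $(\M_{0,+},\Om_+)$ (Bisognano--Wichmann for the chiral free field), so $\tau A\Om_\ot=B^*\Om_+\ot\Om_-=A^*\Om_\ot$; the case $A=S_{R^2}(1\ot C)S_{R^2}^*$ with $C\in\M_{0,-}$ is symmetric, with $A\Om_\ot=\Om_+\ot C\Om_-$ and $A^*\Om_\ot=\Om_+\ot C^*\Om_-$.

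The hard part — in fact the only substantial one — is to promote this from the generators to a core for $\Delta_\ot^{1/2}$, i.e.\ to control $\tau$ on products of generators, where $S_{R^2}$ genuinely couples the two light rays (this is precisely the statement that $S_{R^2}$ does \emph{not} commute with $\M_{0,+}\ot1$, so the computation above cannot simply be iterated). I see two ways to close this gap. The direct one is to pass to the $*$-subalgebra generated by deformed Weyl operators — for which $\NN_R\Om_\ot$ is a dense set of explicit finite-particle vectors — and to run the rapidity-analyticity argument of \cite{Lechner:2011,Tanimoto:2011-1}: the relevant matrix elements are boundary values of functions analytic in the strip $\Strip(0,\pi)$, and the phases $R_0^2$ contribute exactly $S_{R^2}$ to $J$ in the boundary limit; since only $\varphi=R^2$ (rather than $R$ itself) needs to be an inner function for this, the additional regularity assumptions on $R$ made in those works are not needed. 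The more economical route, matching the aim of this letter, is to establish first the unitary equivalence $(\NN_R,U_+\ot U_-,\Om_\ot)\cong(\M_{R,0},U,\Om)$ and then to read off the modular data of $\NN_R$ from those of $\M_{R,0}$, which are $J$ and $\Delta^{it}$ by Theorem~\ref{theorem:M}, transporting them through the intertwining unitary (which equals $V$ of \eqref{eqn_isoexpo} up to the twist $S_{R^2}$).
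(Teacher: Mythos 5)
The paper does not actually prove this statement --- immediately after the theorem it says ``This theorem has been proven in \cite{Tanimoto:2011-1}'' --- so the benchmark is the argument of that reference, and against it your proposal has a genuine gap. The routine parts you list are all correct (unitarity of $S_{R^2}$, vacuum invariance, commutation with translations and with $\Delta_\ot^{it}$, the relation $J_\ot S_{R^2}J_\ot=S_{R^2}^*$, half-sided covariance, and cyclicity of $\Om_+\ot\Om_-$ via triviality of $S_{R^2}$ on $\Cl\Om_+\ot\Hil^-$). But neither of your two routes for closing the argument works as stated. Route (b) is circular within this paper: the lemma computing the modular data of $\hat{\NN}_R$ and the Takesaki step in the proof of Theorem~\ref{thm:Algeq} both \emph{presuppose} the modular data asserted in Theorem~\ref{theorem:N}, so you cannot first prove the equivalence $(\NN_R,\ldots)\cong(\M_{R,0},\ldots)$ and then read the modular data back off.

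Route (a) points in the right direction but omits the one step that carries all the weight. Your converse-Tomita strategy requires, besides a core argument, a weakly dense $*$-subalgebra of the \emph{commutant} $\NN_R'$ on which to check $\tau'A'\Om=A'^*\Om$ --- and exhibiting any nontrivial element of $\NN_R'$ is precisely the hard locality statement, which your proposal never addresses. The candidate commutant dictated by $J_{\NN_R}=S_{R^2}J_\ot$ is $S_{R^2}(\M_{0,+}'\ot1)S_{R^2}^*\vee(1\ot\M_{0,-}')$, and showing that this commutes with $\NN_R$ amounts to showing that $\mathrm{Ad}(S_{R^2})$ maps $1\ot\M_{0,-}$ into $(\M_{0,+}'\ot1)'=\M_{0,+}\otimesb\B(\Hil^-)$ (and the analogous statement on the other light ray). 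This is the Longo--Witten endomorphism property, and it is here --- not in the algebraic bookkeeping with $R_0$ --- that the analyticity of the inner function $\varphi=R^2$ on $\mathbb{H}$ enters irreducibly, via the one-particle theorem of \cite{LongoWitten:2010}. Without this input you obtain neither the separating property of $\Om_+\ot\Om_-$ nor the identification $J_{\NN_R}=S_{R^2}J_\ot$; with it, the argument is essentially the proof in \cite{Tanimoto:2011-1} that the present paper is content to cite.
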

%%===================================================

This theorem has been proven in \cite{Tanimoto:2011-1}. From \eqref{eq:NR0}, it is clear that $\NN_R$ depends on $R$ only via the symmetric inner function $R^2$.

Our results can now compactly be summarized as follows ($R\in\R$, $m\geq0$):
\begin{itemize}
	\item $(\NN_R,U_+\ot U_-,\Om_+\ot\Om_-)\cong(\M_{R,0},U,\Om)$ (Theorem~\ref{thm:Algeq}).
	\item $(\M_{R_1,m},U,\Om)\cong(\M_{R_2,m},U,\Om)$ if and only if $R_1^2=R_2^2$ (Proposition~\ref{proposition:ChoiceOfRootDoesntMatter}).
\end{itemize}
As $R^2$ is essentially the two-particle S-matrix of the model described by the Borchers triple $(\M_{R,m},U,\Om)$, the last result amounts to a proof of uniqueness of the solution of the inverse scattering problem in the setting of the deformations studied here. In case of continuous $R$, such an effect was already observed in \cite{Alazzawi:2012}.
For massless nets obeying a number of natural conditions, uniqueness of the inverse scattering problem is known
once one fixes the asymptotic algebra \cite{Tanimoto:2011-1}. Furthermore, explicit examples of
S-matrices not preserving the Fock space structure are known in the massless case \cite{BischoffTanimoto:2011}.

The convenient deformation formula \eqref{eq:NR0} is a result of the chiral structure present in the massless case, and has no direct analogue in the massive case. In Section~\ref{section:m>0}, we will discuss why the situation is more complex in the massive case even though (formal) relations between deformed and
undeformed creation and annihilation operators still exist. 

%\input{chiral2}
%%
%% last change: 11.09.2012, 10:15 - Jan
%% 11.09.: Minor changes to prevent overfull boxes, selfadjoint-> self-adjoint
%% 04.09., really late: three signs changed in Thm 2.4, see screenshot in Letter/ , and Takesaki-reference added
%% 04.09., even later: Chiral splitting: Sign fixes ...
%% 04.09.,later: Lemma 2.2.(proof): "factors vanish" -> "are equal to one", Prop 2.3.(proof)
%%               q_1, \ldots, q_n -> p_1, \ldots, p_n before start of calculation to get
%%               notation consistent + slightly moved.
%% 04.09.: minimal changes in formulation, eliminated some typos

%%===================================================
\section{Equivalence of the two deformations in the massless case}\label{section:m=0}
%%===================================================

The aim of this section is to demonstrate the equivalence of the mass zero Borchers triples $(\M_{R,0},U,\Om)$ \eqref{eq:MRm} and $(\NN_R,U_+\ot U_-,\Om_+\ot U_-)$ \eqref{eq:NR0} for arbitrary roots $R\in\R$. From Theorem~\ref{theorem:M} and Theorem~\ref{theorem:N}, we see that the modular groups of these von Neumann algebras coincide with the one parameter boost groups on their respective Hilbert spaces, but their modular conjugations differ by $S_{R^2}$, i.e. we have $J_{\NN_R}=S_{R^2}V^*J_{\M_{R,0}}V$ with the canonical unitary $V:\Hil^+\ot\Hil^-\to\Hil$ \eqref{eqn_isoexpo}. We will therefore in a first step go over to an equivalent form of $\NN_R$ which has modular conjugation $V^*J_{\M_{R,0}}V$, without the factor $S_{R^2}$. In general, this can be accomplished by conjugating with a root of the ``S-matrix'' $J_{\NN_R}V^*J_{\M_{R,0}}V$ \cite{Wollenberg:1992}, and in our present situation, this amounts to considering
\begin{align}\label{eq:Nhat0}
         \hat{\NN}_R
         :=
         S_R^* (\M_{0,+}\ot1)S_R\vee S_R(1\ot\M_{0,-})S_R^*
         \,.
\end{align}
%%===================================================
\begin{lemma}
	Let $R\in\R$. Then $(\hat{\NN}_R,U_+\ot U_-,\Om_+\ot\Om_-)$ is a Borchers triple equivalent to $(\NN_R,U_+\ot U_-,\Om_+\ot\Om_-)$, with modular data
	\begin{align}\label{eq:Nhat-ModularData}
		J_{\hat{\NN}_R}
		=
		J_\ot
		\,,\qquad
		\Delta^{it}_{\hat{\NN}_R}
		=
		\Delta^{it}_\ot
		.
	\end{align}
\end{lemma}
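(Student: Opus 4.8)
The statement packages three claims: (i) the triple $(\hat\NN_R,U_+\ot U_-,\Om_+\ot\Om_-)$ is a Borchers triple, (ii) it is equivalent to $(\NN_R,U_+\ot U_-,\Om_+\ot\Om_-)$, and (iii) its modular data are $J_\ot$ and $\Delta_\ot^{it}$. The natural strategy is to exhibit an explicit unitary intertwiner and then read off everything from Theorem~\ref{theorem:N}. I would take the unitary to be $W := S_R S_{R^2}^*$ (equivalently $S_{R^2}^{1/2}$ in the sense that $S_R^2 = S_{R^2}$), which acts diagonally by the scalar factors $\prod_{i,j} R_0(p_i,q_j)$, and note that $W^* \NN_R W = \hat\NN_R$ is essentially a one-line computation: conjugate the first generating algebra $\M_{0,+}\ot 1$ by $W^* = S_{R^2} S_R^*$, and since $S_{R^2}$ commutes with $1\ot\M_{0,-}$... wait — I must be careful: $S_{R^2}$ does \emph{not} commute with $\M_{0,+}\ot 1$ in general, so the rearrangement $W^*(\M_{0,+}\ot1)W = S_R^*(\M_{0,+}\ot1)S_R$ needs the identity $S_{R^2}(\M_{0,+}\ot1)S_{R^2}^* = \M_{0,+}\ot 1$, which is precisely the content of Theorem~\ref{theorem:N}'s proof (the $S$-symbol commutes with the left chiral algebra when the right factor is the full space, because the boundary values of $R_0$ restricted to $q>0$ or the relevant light ray are trivial or cancel). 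So the first thing to verify is $S_{R^2}(\M_{0,+}\ot 1)S_{R^2}^* = \M_{0,+}\ot 1$ and its counterpart; given this, $W^*\NN_R W = \hat\NN_R$ follows by direct substitution.

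\emph{Step 1.} Establish $S_{R^2}(\M_{0,+}\ot 1)S_{R^2}^* = \M_{0,+}\ot 1$. Use the definition \eqref{eqn_Saction}: the $S_{R^2}$ factor on an $(n,n')$-particle vector is $\prod_{i,j} R_0(p_i,q_j)^2$; operators in $\M_{0,+}\ot 1$ act only in the $+$ variables, and the argument mirrors the locality/commutation computation already carried out in \cite{Tanimoto:2011-1} to prove Theorem~\ref{theorem:N}. I would simply cite that $S_{R^2}$ implements an automorphism fixing $\M_{0,+}\ot1$ (this is implicit in the fact that $\NN_R \supset \M_{0,+}\ot1$ together with the symmetry between the two factors), or reproduce the short argument.

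\emph{Step 2.} Deduce $W^*\NN_R W = \hat\NN_R$ with $W = S_R S_{R^2}^* = S_{R^2}^* S_R$ (the two $S$-operators being mutually commuting since both are diagonal multiplication operators). Then $W^*(\M_{0,+}\ot1)W = S_{R^2}S_R^*(\M_{0,+}\ot1)S_R S_{R^2}^* = S_{R^2}\big(S_R^*(\M_{0,+}\ot1)S_R\big)S_{R^2}^*$; using Step~1 in the form $S_{R^2}(\,\cdot\,)S_{R^2}^*$ fixes $S_R^*(\M_{0,+}\ot1)S_R$ — which needs a small check but follows because $S_{R^2}$ and $S_R$ commute and $S_{R^2}$ fixes $\M_{0,+}\ot1$ — one gets $S_R^*(\M_{0,+}\ot1)S_R$, the first generator of $\hat\NN_R$. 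Similarly $W^*S_{R^2}(1\ot\M_{0,-})S_{R^2}^* W = S_R(1\ot\M_{0,-})S_R^*$, the second generator. Hence $W^*\NN_R W = \hat\NN_R$.

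\emph{Step 3.} Check $W$ fixes the vacuum and commutes with $U_+\ot U_-$. The vacuum is the $(0,0)$-particle vector, on which every $S_\bullet$ acts as the empty product $=1$, so $W(\Om_+\ot\Om_-)=\Om_+\ot\Om_-$. For the translations: $U_+\ot U_-$ acts on the $(n,n')$-sector by multiplication by $e^{i\sum(\ldots)}$ in each variable, hence commutes with the diagonal operator $W$; this is already used in \cite{Tanimoto:2011-1} for $S_{R^2}$ and applies verbatim to $S_R$. Therefore $W$ is an equivalence of Borchers triples $(\NN_R,\ldots)\cong(\hat\NN_R,\ldots)$, giving (i) and (ii), and by the transformation rule for modular data under equivalences (stated in the excerpt, $VJ_\M V^* = J_{\tilde\M}$, $V\Delta_\M^{it}V^* = \Delta_{\tilde\M}^{it}$) together with Theorem~\ref{theorem:N}, we get $J_{\hat\NN_R} = W^* J_{\NN_R} W = W^* S_{R^2} J_\ot W$ and $\Delta_{\hat\NN_R}^{it} = W^*\Delta_\ot^{it}W = \Delta_\ot^{it}$ (again $W$ commutes with the diagonal boost operator). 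It remains to simplify $W^* S_{R^2} J_\ot W = S_{R^2}^* S_R \cdot S_{R^2} \cdot J_\ot \cdot S_{R^2}^* S_R = S_R J_\ot S_{R^2}^* S_R$, so the last identity to verify is $J_\ot S_{R^2}^* S_R J_\ot^{-1} = S_R^*$, i.e. $J_\ot S_R J_\ot = S_R$ and $J_\ot S_{R^2} J_\ot = S_{R^2}$ in the appropriate antilinear sense, which reduces to the conjugation symmetry $\overline{R_0(p,q)} = R_0(q,p) = R_0(p,q)^{-1}$ built into $\R$ (and the flip action of $J_\ot$ swapping the two chiral factors). Plugging in gives $J_{\hat\NN_R} = S_R \cdot S_R^* \cdot J_\ot = J_\ot$, as claimed.

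\emph{Main obstacle.} The only genuinely non-formal point is Step~1 — that $S_{R^2}$ (and likewise $S_R$) conjugates $\M_{0,+}\ot1$ onto itself — together with the bookkeeping of how $J_\ot$ acts on $S_R$. Everything else is diagonal-operator algebra and the already-proven Theorem~\ref{theorem:N}. I would keep the exposition short: state Step~1 as a consequence of the structure of $\NN_R$ established in \cite{Tanimoto:2011-1}, exhibit $W = S_{R^2}^* S_R$, verify vacuum- and translation-invariance in two lines, and compute the modular data via the equivalence rule, ending with the cancellation $S_R S_R^* = 1$ that collapses $S_{R^2}$ in the conjugation formula.
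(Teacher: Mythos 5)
Your proposal has a genuine gap, concentrated in Step~1 and propagating through the rest. The identity $S_{R^2}(\M_{0,+}\ot 1)S_{R^2}^{*}=\M_{0,+}\ot 1$ is false, and it is certainly not ``the content of Theorem~\ref{theorem:N}'s proof''. If it held, then since $S_{R^2}$ commutes with the translations and fixes $\Om_+\ot\Om_-$, you would get $\NN_R=S_{R^2}\big((\M_{0,+}\ot1)\vee(1\ot\M_{0,-})\big)S_{R^2}^{*}\cong\M_0$, i.e.\ the deformation would be trivial --- contradicting the fact (stated in the paper) that $S_{R^2}$ is the S-matrix, a nontrivial invariant of the Borchers triple. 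Concretely, $S_{R^2}$ multiplies by $\prod_{i,j}R_0(p_i,q_j)^2$, which depends jointly on both chiral sets of variables, so conjugating $a_+(\xi)\ot 1$ by it produces operators acting nontrivially on $\Hil^-$; $S_{R^2}$ normalizes neither chiral factor. With Step~1 gone, Step~2 collapses. There is also an internal inconsistency in your choice of intertwiner: since all the $S_\bullet$ are commuting diagonal multiplications and $S_{R^2}=S_R^2$, your $W=S_RS_{R^2}^{*}$ simplifies to $S_R^{*}$ (it is not ``$S_{R^2}^{1/2}$''), and $W^{*}\NN_RW=S_R\NN_RS_R^{*}$ does \emph{not} equal $\hat\NN_R$. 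Likewise your final identity $J_\ot S_{R^2}^{*}S_RJ_\ot=S_R^{*}$ is false: using $J_\ot S_RJ_\ot=S_R^{*}$ one gets $S_{R^2}S_R^{*}=S_R$, and the computation does not close to $J_\ot$.

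The repair is simpler than what you attempted, and is what the paper does: take the intertwiner to be $S_R$ itself. Then
\begin{align*}
S_R^{*}\NN_RS_R
=
S_R^{*}(\M_{0,+}\ot1)S_R\vee S_R^{*}S_{R^2}(1\ot\M_{0,-})S_{R^2}^{*}S_R
=
S_R^{*}(\M_{0,+}\ot1)S_R\vee S_R(1\ot\M_{0,-})S_R^{*}
=
\hat\NN_R\,,
\end{align*}
using only $S_R^{*}S_{R^2}=S_R$; no normalization of the chiral factors is needed anywhere. Your Step~3 observations (vacuum invariance, commutation with translations and with the boosts/modular group, and the relation $J_\ot S_R=S_R^{*}J_\ot$ coming from $\overline{R_0}=R_0^{-1}$) are correct and are exactly the remaining ingredients; with the intertwiner $S_R$ they give $\Delta^{it}_{\hat\NN_R}=S_R^{*}\Delta_\ot^{it}S_R=\Delta_\ot^{it}$ and $J_{\hat\NN_R}=S_R^{*}S_{R^2}J_\ot S_R=S_R^{*}S_R^{2}S_R^{*}J_\ot=J_\ot$, as claimed.
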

%%===================================================
\begin{proof}
	As $S_R$ \eqref{eqn_Saction} satisfies $S_R^2=S_{R^2}$, we have the unitary equivalence of algebras $\hat{\NN}_{R}=S_R^*\NN_RS_R$. The unitary $S_R$ clearly commutes with all translations $U_+(x_-)\ot U_-(x_+)$ and leaves $\Om_+\ot\Om_-$ invariant. Hence $(\hat{\NN}_R,U_+\ot U_-,\Om_+\ot\Om_-)\cong(\NN_R,U_+\ot U_-,\Om_+\ot\Om_-)$; this also shows that $(\hat{\NN}_R,U_+\ot U_-,\Om_+\ot\Om_-)$ is a Borchers triple.

	Regarding the modular data of $(\hat{\NN}_R,\Om_+\ot\Om_- )$, we first have by modular theory
	\begin{align}
		J_{\hat{\NN}_R}
		=
		S_R^*J_{\NN_R} S_R
		\,,\qquad
		\Delta_{\hat{\NN}_R}^{it}
		=
		S_R^*\Delta_{\NN_R}^{it}S_R
		=
		S_R^*\Delta_\ot^{it}S_R
	\,.
	\end{align}
	Taking into account the action of $\Gamma_\pm(U_{1,\pm}(0,\la))$, one sees from \eqref{eq:R0} and \eqref{eqn_Saction} that $S_R$ commutes with the Lorentz boosts. As these coincide with the modular unitaries $\Delta^{it}_\ot$, the second equation in \eqref{eq:Nhat-ModularData} follows. To establish the claimed form of the modular conjugation, we note $J_\ot S_R=S_{\overline{R}}J_\ot=S_R^*J_\ot$ and compute
	\begin{align*}
		J_{\hat{\NN}_R}
		&=
		S_R^*J_{\NN_R} S_R
		=
		S_R^*S_{R^2}J_\ot S_R
		=
		S_R^*S_R^2 S_R^*J_\ot
		=
		J_\ot
		\,.
	\end{align*}
	This completes the proof.
\end{proof}
%%===================================================

The equivalence between the two deformed Borchers triples with wedge algebras $\M_{R,0}$ and $\hat{\NN}_R$ will now be established
using the creation and annihilation operators into which the fields
generating $\M_0$ can be decomposed. Corresponding to the splitting
$\M_0 = \M_{0,+} \otimes \M_{0,-}$ we have creation and annihilation
operators $a_\pm$, $a_\pm^\dagger$ acting on $\Hil^\pm$.

In the following, we will always suppress the canonical embeddings
$\iota^n_\pm:(\Hil^\pm_1)^{\otimes n}\to\Hil_1^{\otimes n}$,
$(\iota^n_\pm\Psi_n^\pm)(p_1,...,p_n):=\Psi_n^\pm(p_1,...,p_n)$
for $p_1,...,p_n\in\Rl_\pm$ and $(\iota^n_\pm\Psi_n^\pm)(p_1,...,p_n):=0$ otherwise.
With these embeddings understood, $\Hil_1 = \Hil_1^+ \oplus \Hil_1^-$, and hence
$\Hil \cong \Hil^+ \otimes \Hil^-$. This isomorphism is given explicitly by a
unitary
\[
	V: \Hil^+ \otimes \Hil^- \to \Hil\,,
\]
which is uniquely determined by its action on the total set \cite{Guichardet:1972} of ``exponential vectors'' $e^{\Psi_1}:=\sum_{n=0}^\infty\frac{1}{\sqrt{n!}}\Psi_1^{\otimes n}$ by
\begin{equation}
	V (e^{\Psi_1} \otimes e^{\Phi_1})
	:=
	e^{\Psi_1 \oplus \Phi_1}\,,
	\qquad
	\Psi_1\in \Hil_1^+,\;\Phi_1\in\Hil_1^-\,.
	\label{eqn_isoexpo}
\end{equation}

The definitions of $\M_{R,0}$ and $\NN_R$ make use of the realizations of $\Hil$ as $\Gamma(\Hil_1^+ \oplus \Hil_1^-)$ and $\Gamma(\Hil_1^+) \otimes \Gamma(\Hil_1^-)$, respectively. We will work on $\Hil=\Gamma(\Hil_1^+ \oplus \Hil_1^-)$, and first compute an explicit expression of $S_R$ on this space.

%%===================================================
\begin{lemma}
Let $R\in\R$ and $\hat{S}_R:= V S_R V^*$. Then, $\Psi \in \Hil$,
\begin{align}\label{eq:Shat}
	\big[\hat{S}_R \Psi \big]_n (p_1, \ldots, p_n)
	=
	\prod_{i,j=1}^n
	R_0^+(p_i, p_j)
	\cdot
     \Psi_n(p_1, \ldots, p_n)
	\,,
\end{align}
where
\[
	R_0^+(p,q)
	=
	\begin{cases}
		R(- p q) & \ p>0, q<0 \\
		1     & \text{ else }
	\end{cases}
	\,.
\]
\end{lemma}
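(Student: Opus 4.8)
The strategy is to compute the conjugation $V S_R V^*$ directly on the total set of exponential vectors $e^{\Psi_1}$, $\Psi_1 \in \Hil_1 = \Hil_1^+ \oplus \Hil_1^-$, using the defining relation \eqref{eqn_isoexpo}, namely $V(e^{\Psi_1}\ot e^{\Phi_1}) = e^{\Psi_1\oplus\Phi_1}$ for $\Psi_1\in\Hil_1^+$, $\Phi_1\in\Hil_1^-$. Since both $\hat S_R$ as defined in the lemma and the product-of-$R_0^+$ operator on the right-hand side of \eqref{eq:Shat} are bounded, it suffices to check they agree on this total set; and because $S_R$ in \eqref{eqn_Saction} acts diagonally, preserving the bi-grading $(n,n')$, one can reduce to checking the action on $n$-fold tensor powers.

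\textbf{Key steps.} First I would write $e^{\Psi_1\oplus\Phi_1}$ out in components: its $n$-particle component is $\frac{1}{\sqrt{n!}}(\Psi_1\oplus\Phi_1)^{\ot n}$, which, thought of as a function on $\Rl^n$, is a product $\prod_{k=1}^n(\Psi_1\oplus\Phi_1)(p_k)$ where $(\Psi_1\oplus\Phi_1)(p) = \Psi_1(p)$ for $p>0$ and $\Phi_1(p)$ for $p<0$. Next I would apply $S_R$ in the tensor realization: $V^*e^{\Psi_1\oplus\Phi_1} = e^{\Psi_1}\ot e^{\Phi_1}$, and $S_R$ multiplies the $(n,n')$-component by $\prod_{i\le n,\,j\le n'}R_0(p_i,q_j)$, where the $p_i>0$ are the positive-ray momenta and the $q_j<0$ the negative-ray momenta. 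Then I would pull back through $V$ and observe: on the full $\Hil$, the $n$-particle component picks up a factor which is the product of $R_0(p_i,q_j)$ over all ordered pairs consisting of one momentum $>0$ and one momentum $<0$. Using the definition \eqref{eq:R0} with $R_1=R_2=1$, $R_0(p,q) = R(-pq)$ only when $p>0,q<0$ and equals $1$ otherwise, so the total factor over all ordered pairs $(i,j)$ with $p_i,p_j$ any signs is exactly $\prod_{i,j=1}^n R_0^+(p_i,p_j)$ with $R_0^+$ as stated — the pairs with both same sign, or with $p_i<0,p_j>0$, contributing $1$. Finally, since $e^{\Psi_1}\mapsto$ (this multiplied exponential vector) is precisely the action of the multiplication operator in \eqref{eq:Shat} on $e^{\Psi_1}$, and exponential vectors are total, the two operators coincide.

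\textbf{Anticipated obstacle.} The only real subtlety is bookkeeping: in the tensor-product picture $S_R$ multiplies by a product over pairs $(p_i,q_j)$ with $p_i$ ranging over the $+$-arguments and $q_j$ over the $-$-arguments (an asymmetric, one-directional product of $n\cdot n'$ factors), whereas in \eqref{eq:Shat} the product runs over \emph{all} ordered pairs $i,j\in\{1,\dots,n\}$ of the $n$ momenta on the single line $\Rl$. One must carefully verify that extending the product to all ordered pairs introduces only factors equal to $1$ under the definition of $R_0^+$, i.e. that $R_0^+(p,q) = R_0(p,q)$ when $p>0,q<0$ and $R_0^+(p,q)=1$ in every other case (same sign, or $p<0<q$); this matches \eqref{eq:R0} with $R_1=R_2=1$ exactly. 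Keeping track of which index plays the role of a $+$-momentum versus a $-$-momentum when the single-line momenta $p_1,\dots,p_n$ are split by sign is the place where care is needed, but no genuine difficulty arises.
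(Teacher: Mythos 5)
Your proposal is correct and follows essentially the same route as the paper: both compute $VS_RV^*$ on the total set of exponential vectors, replace $R_0$ by $R_0^+$ on the support of $\Psi_1\circ\iota_+$ and $\Phi_1\circ\iota_-$, extend the product to all ordered pairs at the cost of factors equal to $1$, and pull the resulting totally symmetric factor out through the symmetrization implicit in $V$. The one point you gloss over slightly is that $S_R(e^{\Psi_1}\otimes e^{\Phi_1})$ is no longer an exponential vector, so "pulling back through $V$" requires the explicit component formula \eqref{eqn_isoexpli} rather than the defining relation \eqref{eqn_isoexpo} alone — but your observation that the extended product is symmetric and independent of the $(k,n-k)$ splitting is exactly what makes that step work.
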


\begin{proof}
As exponential vectors form a total set in $\Hil$, it is sufficient to compute $\hat{S}_R$ on $e^{\Psi_1\oplus \Phi_1}$ to verify \eqref{eq:Shat}. The action of $V$ from \eqref{eqn_isoexpo} on vectors $\Xi=\sum_{n,m=0}^\infty\Xi_{n,m}\in\bigoplus_{n,m=0}^\infty((\Hil_1^+)^{\otimes_s n}\ot(\Hil_1^-)^{\otimes_s m})=\Hil^+\ot\Hil^-$
is explicitly given by
\begin{equation}
	\big[V\Xi]_n
	=
	\sum_{k=0}^n
	\binom{n}{k}^{1/2}
	\Symm_n \Xi_{k,n-k}
	\,,
	\label{eqn_isoexpli}
\end{equation}
where for $f: \Rl^n \to \Cl$,
\[
	[\Symm_n f](p_1, \ldots, p_n)
	:=
	\frac{1}{n!} \sum_{\pi \in \frS_n} f(p_{\pi(1)}, \ldots, p_{\pi(n)})
\]
denotes total symmetrization.
Combining this with \eqref{eqn_Saction}, we find
\begin{align*}
	[&\hat{S}_{R} (e^{\Psi_1\oplus\Phi_1})]_n(p_1,\ldots,p_n)
	=
	\sum_{k=0}^n\binom{n}{k}^{1/2}
	\Symm_n(S_{R}(e^{\Psi_1}\otimes e^{\Phi_1})_{k,n-k})(p_1,\ldots,p_n)
	\\
	&=
	\sum_{k=0}^n \frac{\binom{n}{k}^{1/2}}{n!}\sum_{\pi\in\frS_n}
	\prod_{\substack{i=1\ldots k\\j=k+1\ldots n}} \!\!
	 R_0^+(p_{\pi(i)},p_{\pi(j)})	\frac{\Psi_1(p_{\pi(1)})\cdots\Psi_1(p_{\pi(k)})\cdot\Phi_1(p_{\pi(k+1)})\cdots\Phi_1(p_{\pi(n)})}{\sqrt{k!}\sqrt{(n-k)!}}
	\,.
\end{align*}
In the second line, $R_0$ was replaced by $R_0^+$, which does not change the result since the factors of $\Psi_1$ and $\Phi_1$ (explicitly writing out the embedding $\Psi_1 \circ \iota_+$ and $\Phi_1 \circ \iota_-$) are equal to zero unless $p_{\pi(i)} >0$ and $p_{\pi(j)} < 0$, and $R_0(p,q)=R_0^+(p,q)$ for $p>0$, $q<0$.

Next we change the range of indices $i=1,...,k$, $j=k+1,...,n$ of the product to $i=1,...,n$, $j=1,...,n$.
This does not change the result because for the indices $i,j$ which were not present before, we have $R_0^+(p_{\pi(i)},p_{\pi(j)})=1$ on the support of the
remaining factors. After these manipulations, $(p_1,...,p_n)\mapsto\prod_{i,j=1}^n R_0^+(p_{\pi(i)},p_{\pi(j)})$
is a totally symmetric function independent of $\pi$ and $k$. Thus we get
\begin{align*}
	[\hat{S}_R&(e^{\Psi_1\oplus\Phi_1})]_n(p_1,\ldots,p_n)
	\\
	&=
	\prod_{i,j=1}^n R_0^+(p_i,p_j)
	\sum_{k=0}^n\binom{n}{k}^{1/2}\frac{1}{n!}\sum_{\pi\in\frS_n}
	 \frac{\Psi_1(p_{\pi(1)})\cdots\Psi_1(p_{\pi(k)})\cdot\Phi_1(p_{\pi(k+1)})\cdots\Phi_1(p_{\pi(n)})}{\sqrt{k!}\sqrt{(n-k)!}} \\
	&=
	\prod_{i,j=1}^n R_0^+(p_i,p_j)
	\left[ V ( e^{\Psi_1} \otimes e^{\Phi_1}) \right]_n (p_1, \ldots, p_n) \\
	& =
	\prod_{i,j=1}^n R_0^+(p_i,p_j)
	\left[ e^{\Psi_1 \oplus \Phi_1} \right]_n (p_1, \ldots, p_n)
	\,,
\end{align*}
and the proof is finished.
\end{proof}
%%===================================================

After these preparations, we can now state the precise relation between the
generators appearing in the two types of deformations.

%%===================================================
\begin{proposition}\label{prop:CreatAnniEquiv}
        Let $R\in\R$ be a root of a symmetric inner function. Then, $\psi_\pm \in \Hil_1^\pm$,
        \begin{align}
			a_R(\psi_+)
			=
			V S_R^* (a_+(\psi_+) \otimes 1) S_R V^*
			\,,\qquad
			a_R (\psi_-)
			=
			V S_R (a_+(\psi_-) \otimes 1) S_R^* V^* \,.
			\label{eq:Mainrel}
        \end{align}
\end{proposition}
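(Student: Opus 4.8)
The plan is to verify the two identities in \eqref{eq:Mainrel} by comparing the actions of both sides on an arbitrary vector $\Psi\in\Hil$ of finite particle number, using the explicit formula \eqref{eq:Shat} for $\hat{S}_R=VS_RV^*$. Since $a_R(\psi_\pm)=a(\psi_\pm)T_{R,0}(\cdot)$ (on the relevant half-ray), and the right-hand sides are just conjugates of the undeformed annihilation operators $a_+(\psi_\pm)$ (embedded via $\iota_+$) by $VS_R^{\pm}V^*=\hat{S}_R^{\pm 1/2}\cdots$ — more precisely by $\hat S_R^{\pm 1}$ after using $S_R^2=S_{R^2}$ is \emph{not} what appears; the operators $VS_R^*(\,\cdot\,)S_RV^*$ and $VS_R(\,\cdot\,)S_R^*V^*$ involve $\hat S_R$ and $\hat S_R^*$ directly — everything reduces to a bookkeeping computation with the product $\prod_{i,j}R_0^+(p_i,p_j)$.

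Concretely, I would first record the action of $\iota_+$ and of $a_+(\psi_+)$ on $\Hil=\Gamma(\Hil_1^+\oplus\Hil_1^-)$: the operator $a_+(\psi_+)\otimes 1$, transported by $V$, acts on $\Psi_n$ by integrating out one variable $q$ against $\overline{\psi_+(q)}$, with $\psi_+$ supported in $\Rl_+$, giving a formula structurally identical to \eqref{eq:a_R-explicit} but without the $R_m$-factors. Then I would compute $\hat S_R^*(a_+(\psi_+)\otimes 1)\hat S_R$ on $\Psi_n$: applying $\hat S_R$ first multiplies by $\prod_{i,j=1}^{n+1}R_0^+$ over the $(n+1)$-particle variables (one of which becomes the integration variable $q$, forced positive by $\psi_+$), integrating out $q$ produces $\sqrt{n+1}\int\frac{dq}{|q|}\overline{\psi_+(q)}\prod_{k}R_0^+(q,p_k)\prod_{i,j}R_0^+(p_i,p_j)\,\Psi_{n+1}(q,p_1,\dots,p_n)$, and finally $\hat S_R^*$ multiplies by $\overline{\prod_{i,j=1}^n R_0^+(p_i,p_j)}=\prod_{i,j}R_0^+(p_i,p_j)^{-1}$. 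The symmetric double products over $p_1,\dots,p_n$ cancel, and since $q>0$ one has $R_0^+(q,p_k)=R_0^+(p_k,q)^{-1}$ matching $R_0(q,p_k)=R_0(p_k,q)^{-1}$; one checks this equals exactly $[a_R(\psi_+)\Psi]_n$ from \eqref{eq:a_R-explicit} with $m=0$ (recalling $R_0$ of \eqref{eq:R0} and that $a_R(\psi_+)$ only sees $R_0(q,p_k)$ with $q>0$, so $R_0(q,p_k)=R(-q p_k)$ for $p_k<0$ and $1$ for $p_k>0$, i.e. $R_0^+(q,p_k)$). The second identity is the mirror computation: for $\psi_-$ supported in $\Rl_-$, conjugating by $\hat S_R$ (not $\hat S_R^*$) multiplies by $\prod R_0^+$ and then by $\overline{\prod R_0^+}$ in the opposite order, and since now the integration variable $q<0$ one uses $R_0^+(p_k,q)=R(-p_k q)$ for $p_k>0$, again reproducing $R_0(q,p_k)$ for $q<0$ as in \eqref{eq:R0}; note the right-hand side writes $a_+(\psi_-)$ because under the identification $\Hil_1^-\hookrightarrow\Hil_1$ the relevant creation/annihilation operators on $\Hil^+$ act the same way, the subscript merely recording which factor of the tensor product is used before transport by $V$.

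The only genuinely delicate point is matching the surviving product of $R_0^+$-factors after the cancellation against the definition of $R_0$ in \eqref{eq:R0}: one must be careful that the integration variable $q$ has a definite sign (positive in the first identity, negative in the second) coming from the support of $\psi_\pm$, that $R_0^+$ differs from $R_0$ only off this sign sector, and that on the support of the remaining data those discrepant values of $R_0^+$ are all equal to $1$ so the replacement is harmless — exactly the kind of argument already used in the proof of the previous lemma. Everything else — the embeddings $\iota^n_\pm$, the combinatorial factor $\sqrt{n+1}$, the symmetrization — is routine and identical on both sides. By density of finite-particle vectors, equality on such $\Psi$ gives equality of the operator-valued distributions, completing the proof.
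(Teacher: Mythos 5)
Your plan is correct and takes essentially the same route as the paper's own proof: both reduce \eqref{eq:Mainrel} via $a\circ\iota_+=V(a_+\otimes 1)V^*$ (and its $-$ counterpart) to the identities $\hat S_R^*\,a(\psi_+)\,\hat S_R=a_R(\psi_+)$ and $\hat S_R\,a(\psi_-)\,\hat S_R^*=a_R(\psi_-)$ on finite-particle vectors, use the explicit formula \eqref{eq:Shat} to cancel the symmetric $n\times n$ product of $R_0^+$-factors, and match the surviving cross-factors against $R_0$ from \eqref{eq:R0} using the sign of the integration variable forced by $\supp\psi_\pm$. The only differences are cosmetic (the paper computes $\hat S_R a(\psi)\hat S_R^*$ once for general $\psi$ and handles $\supp\psi\subset\Rl_+$ via $\hat S_R^*=\hat S_{\overline R}$, whereas you run the two conjugations separately), apart from one harmless slip in your parenthetical claim $R_0^+(q,p_k)=R_0^+(p_k,q)^{-1}$, which fails for $q>0$, $p_k<0$ but is immediately superseded by the correct matching $R_0^+(q,p_k)=R_0(q,p_k)$ that your argument actually uses.
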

%%===================================================
\begin{proof}
	Let $\Psi\in\Hil$ be a vector of finite particle number. Using
	$a \circ \iota_+ = V (a_+ \otimes 1) V^*$,
	$a \circ \iota_- = V (1 \otimes a_-) V^*$ and the corresponding
	relations for $a^\dagger$ and $a^\dagger_\pm$, we
	can equivalently show $\hat{S}_R^* a(\psi_+) \hat{S}_R = a_R(\psi_+)$,
	$\hat{S}_R a(\psi_-) \hat{S}_R^* = a_R(\psi_-)$. To this end, we compute
        ($p_1,...,p_n\in\Rl$):
	\begin{align}
		[&\hat{S}_R a(\psi)\hat{S}_R^*\Psi]_n(p_1,\ldots,p_n)
		=
		\prod_{i,j=1}^n {R_0^+}(p_i,p_j)
		\cdot
		\sqrt{n+1}
		\int \frac{d q}{\lvert q \rvert}\, \overline{\psi(q)}\,
		[\hat{S}_R^*\Psi]_{n+1}(q,p_1,\ldots,p_n)
		\nonumber
		\\
		&=
		\sqrt{n+1}
		\prod_{i,j=1}^n \!\! R_0^+(p_i,p_j)
		\int \frac{d q}{\lvert q \rvert} \overline{\psi(q)}\,
		\prod_{i',j'=1}^n \!\! \overline{{R_0^+}(p_{i'},p_{j'})}
		\cdot
		\prod_{k=1}^n\overline{{R_0^+}(q,p_k){R_0^+}(p_k,q)}\,
		\Psi_{n+1}(q,p_1,..,p_n)
		\nonumber
		\\
		&=
		\sqrt{n+1} \:
		\bigg\{
		\int_0^\infty \frac{d q}{\lvert q \rvert}\, \overline{\psi(q)}\,
		\prod_{k=1}^n \overline{{R_0^+}(q, p_k)}
		\Psi_{n+1}(q,p_1,\ldots,p_n)
		\label{eq:aR-comparison}
		\\
		&\qquad\qquad\qquad\qquad
		+
		\int_{-\infty}^0 \frac{d q}{\lvert q \rvert}\, \overline{\psi(q)}\,
		\prod_{k=1}^n \overline{{R_0^+}(p_k,q)}
		\Psi_{n+1}(q,p_1,\ldots,p_n)
		\bigg\}
		\,,
		\nonumber
	\end{align}
	where in the last equality $R_0^+(p,q) = 1$ unless
	$p>0$ and $q<0$ was used.

	On the other hand, using $R_0(p,q) = R_0^+(p,q)$ for $p>0, q<0$, and $R_0(p,q) = \overline{R_0^+(q,p)}$ for $p<0, q>0$,
	we find
	\begin{align}
		[a_R(\psi)\Psi]_n(p_1,\ldots,p_n)
		&=
		\sqrt{n+1}
		\int \frac{d q}{\lvert q \rvert}\, \overline{\psi(p)}\,
		\prod_{k=1}^n R_0(q, p_k)
		\Psi_{n+1}(q,p_1,\ldots,p_n)
		\nonumber \\
		&=
		\sqrt{n+1}
		\bigg\{
		\int_0^\infty \frac{d q}{\lvert q \rvert}\, \overline{\psi(q)}\,
		\prod_{k=1}^n R_0^+(q, p_k)
		\Psi_{n+1}(q,p_1,\ldots,p_n)
		\label{eq:Deffullpaxis}
		\\
		&\qquad\qquad\qquad
		+
		\int_{-\infty}^0 \frac{d q}{\lvert q \rvert}\, \overline{\psi(q)}\,
		\prod_{k=1}^n \overline{R_0^+(p_k,q)}
		\Psi_{n+1}(q,p_1,\ldots,p_n)
		\bigg\}
		\,, \nonumber
	\end{align}
	from which we read off $\hat{S}_R a(\psi)\hat{S}_R^*=a_R(\psi)$ for $\supp\psi\subset\Rl_-$.
	For $\supp\psi\subset\Rl_+$, the remaining integrals in \eqref{eq:aR-comparison} and \eqref{eq:Deffullpaxis} agree up to complex conjugation of $R_0^+$; this
	is compensated  by using $\hat{S}_R^*=\hat{S}_{\overline{R}}$ , i.e. in this case we have
	$\hat{S}_R^* a(\psi)\hat{S}_R=a_{R}(\psi)$.
\end{proof}
%%===================================================

To obtain the equivalence of Borchers triples, recall that the massless field $\phi_0$ decomposes into chiral components $\phi_{0,\pm}$, each depending on one light ray coordinate $x_\mp=x_0\mp x_1$ only, namely for $f$ which is the derivative of a function in $\Ss(\Rl^2)$,
\begin{align}
	\phi_0(f)
	&=
	V\big(\phi_{0,+}(f_+)\ot1+1\ot\phi_{0,-}(f_-)\big)V^*
	\,,\qquad
	\phi_{0,\pm}(f_\pm)
	=
	\ad_\pm(\fti_\pm|_{\Rl_\pm})+a_\pm(\widetilde{\overline{f_\pm}}|_{\Rl_\pm})
	\,,\nonumber
	\\
	f_ \pm(\mp x_\mp)
	&=
	\frac{1}{2\sqrt{2\pi}}\int_\Rl dx_\pm\,f\left(\tfrac{1}{2}(x_++x_-),\tfrac{1}{2}(x_+-x_-)\right)
	\,.
	\label{eq:fpm}
\end{align}

The algebras in question are generated by these field operators (all of which are essentially self-adjoint on their respective subspaces of finite particle number) by
\begin{align}
	\M_0
	&=
	\{e^{i\phi_0(f)}\,:\,f\in\Ss_\Rl(W)\}''
	\,,\\
	\M_{0,\pm}
	&=
	\{e^{i\phi_{0,\pm}(f_\pm)}\,:\,f\in\Ss_\Rl(W)\}''
	=
	\{e^{i\phi_{0,\pm}(g)}\,:\,g\in\Ss_\Rl(\Rl_+)\}''
	\,.
	\label{eq:M0pm-fields}
\end{align}
We now come to the main result of this section.
%%===================================================
\begin{theorem}\label{thm:Algeq}
	Let $R\in\R$. Then $(\M_{R,0},U,\Om)\cong(\NN_R,U_+\ot U_-,\Om_+\ot\Om_-)$.
\end{theorem}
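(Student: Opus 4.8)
The plan is to prove the stronger statement $\M_{R,0}=V\hat{\NN}_{R}V^{*}$, where $V:\Hil^{+}\ot\Hil^{-}\to\Hil$ is the canonical unitary \eqref{eqn_isoexpo} and $\hat{\NN}_{R}$ is the algebra \eqref{eq:Nhat0} of the Lemma relating $\hat{\NN}_{R}$ and $\NN_{R}$; I will work throughout on $\Hil$ and write $\hat{S}_{R}=VS_{R}V^{*}$. Granting this, the theorem follows at once: by \eqref{eqn_isoexpo} one has $V(\Om_{+}\ot\Om_{-})=\Om$, and by the displayed intertwining relations for $\Gamma(U_{1}(x,\la))$ one has $V(U_{+}(x_{-})\ot U_{-}(x_{+}))V^{*}=U(x)$, so $V^{*}$ implements $(\M_{R,0},U,\Om)\cong(\hat{\NN}_{R},U_{+}\ot U_{-},\Om_{+}\ot\Om_{-})$; combined with the equivalence $(\hat{\NN}_{R},\,\cdot\,,\,\cdot\,)\cong(\NN_{R},\,\cdot\,,\,\cdot\,)$ of that Lemma this gives the assertion.

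\emph{Chiral decomposition of the deformed field.} For real $f\in\Ss_{\Rl}(W)$ I would write $\phi_{R,0}(f)=\ad_{R}(f^{+})+a_{R}(\overline{f^{-}})$ and split the one-particle vectors $f^{+},\overline{f^{-}}\in\Hil_{1}=\Hil_{1}^{+}\oplus\Hil_{1}^{-}$ into their $\Rl_{+}$- and $\Rl_{-}$-supported parts. Applying to each of the four resulting pieces the relations established in the proof of Proposition~\ref{prop:CreatAnniEquiv}, namely $\hat{S}_{R}^{*}a(\psi)\hat{S}_{R}=a_{R}(\psi)$ for $\supp\psi\subset\Rl_{+}$ and $\hat{S}_{R}a(\psi)\hat{S}_{R}^{*}=a_{R}(\psi)$ for $\supp\psi\subset\Rl_{-}$ together with their adjoints, and identifying the $\Rl_{+}$- resp.\ $\Rl_{-}$-supported part of the undeformed field $\phi_{0}(f)=\ad(f^{+})+a(\overline{f^{-}})$ with its chiral component $\phi_{0}^{(+)}(f):=V(\phi_{0,+}(f_{+})\ot1)V^{*}$ resp.\ $\phi_{0}^{(-)}(f):=V(1\ot\phi_{0,-}(f_{-}))V^{*}$ (this is precisely the content of \eqref{eq:fpm}), I obtain
\begin{align*}
	\phi_{R,0}(f)
	=
	\hat{S}_{R}^{*}\,\phi_{0}^{(+)}(f)\,\hat{S}_{R}
	\;+\;
	\hat{S}_{R}\,\phi_{0}^{(-)}(f)\,\hat{S}_{R}^{*}\,.
\end{align*}
Since $f$ is real, $f_{+}$ and $f_{-}$ are real by \eqref{eq:fpm}, so $\phi_{0,\pm}(f_{\pm})$, and hence both summands above, are essentially self-adjoint on the finite-particle space.

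\emph{Inclusion and equality.} By \eqref{eq:M0pm-fields} the bounded functions of $\overline{\phi_{0,+}(f_{+})}$ lie in $\M_{0,+}$, so the closure of $\phi_{0}^{(+)}(f)$ is affiliated with $V(\M_{0,+}\ot1)V^{*}$, and hence the closure of the first summand with $\hat{S}_{R}^{*}V(\M_{0,+}\ot1)V^{*}\hat{S}_{R}=VS_{R}^{*}(\M_{0,+}\ot1)S_{R}V^{*}$; similarly the closure of the second summand is affiliated with $VS_{R}(1\ot\M_{0,-})S_{R}^{*}V^{*}$, and both of these algebras are contained in $V\hat{\NN}_{R}V^{*}$. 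As $\phi_{R,0}(f)$ is essentially self-adjoint on the finite-particle space (the fact used in \cite{Lechner:2011} to define \eqref{eq:MRm}), and a sum of self-adjoint operators each affiliated with a von Neumann algebra $\M$ is, once essentially self-adjoint, again affiliated with $\M$ — any $X\in\M'$ preserves the intersection of the operator domains, commutes with each summand there, hence with the closure of the sum on the finite-particle core — we conclude $e^{i\overline{\phi_{R,0}(f)}}\in V\hat{\NN}_{R}V^{*}$ for all real $f\in\Ss_{\Rl}(W)$, that is $\M_{R,0}\subset V\hat{\NN}_{R}V^{*}$. For the converse inclusion I would argue by modular theory: both $\M_{R,0}$ and $V\hat{\NN}_{R}V^{*}$ have $\Om$ as a cyclic separating vector (Theorem~\ref{theorem:M} with $m=0$; the Lemma relating $\hat{\NN}_{R}$ and $\NN_{R}$ together with $V(\Om_{+}\ot\Om_{-})=\Om$), and both have the \emph{same} modular conjugation $J$ with respect to $\Om$, since $J_{\M_{R,0}}=J$ by Theorem~\ref{theorem:M} while $J_{V\hat{\NN}_{R}V^{*}}=VJ_{\hat{\NN}_{R}}V^{*}=VJ_{\ot}V^{*}=J$ by \eqref{eq:Nhat-ModularData} and $J_{\ot}=V^{*}JV$. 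Two von Neumann algebras $\M_{1}\subset\M_{2}$ sharing a cyclic separating vector and the same modular conjugation $J$ satisfy $\M_{1}'=J\M_{1}J=J\M_{2}J=\M_{2}'$, hence $\M_{1}=\M_{2}$; applied to $\M_{R,0}\subset V\hat{\NN}_{R}V^{*}$ this gives the desired equality, and by the opening remarks the theorem follows.

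\emph{Main obstacle.} The delicate point is that the two deformed chiral components of $\phi_{R,0}(f)$ appearing above do not commute in general — this is exactly what makes the deformation non-trivial — so one cannot simply factorise $e^{i\phi_{R,0}(f)}$ into a product of a unitary from $VS_{R}^{*}(\M_{0,+}\ot1)S_{R}V^{*}$ and one from $VS_{R}(1\ot\M_{0,-})S_{R}^{*}V^{*}$. The affiliation argument circumvents this, but it rests essentially on the non-trivial input that the full deformed field $\phi_{R,0}(f)$ is essentially self-adjoint on the finite-particle space; and the chiral splitting of the first step itself requires a careful bookkeeping of the four creation/annihilation-times-$\Rl_{\pm}$ contributions and of their identification with $\phi_{0}^{(\pm)}(f)$.
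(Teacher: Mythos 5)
Your argument is correct, but it runs in the opposite direction to the paper's. The paper proves the inclusion $V\hat{\NN}_RV^*\subset\M_{R,0}$: for each chiral generator $S_R^*(\phi_{0,+}(f_+)\ot1)S_R$ it \emph{constructs} a wedge test function $f$ with $f_-=0$ (a product of $f_+$ with a function of $x_+$ of integral one) so that, via Proposition~\ref{prop:CreatAnniEquiv}, the conjugated chiral field is literally equal to a single deformed field $\phi_{R,0}(f)$; no sum of non-commuting operators ever appears. You instead prove $\M_{R,0}\subset V\hat{\NN}_RV^*$ by decomposing a general $\phi_{R,0}(f)$ into the two deformed chiral summands and invoking an affiliation argument for a sum of self-adjoint operators that is essentially self-adjoint on a common core. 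Both proofs then close the gap with modular theory, but again differently: the paper applies Takesaki's theorem using the coincidence of the modular \emph{groups} $\Delta^{it}$, while you use the coincidence of the modular \emph{conjugations} $J$ (the correct chain there is $\M_1'=J\M_1J\subset J\M_2J=\M_2'\subset\M_1'$, which forces equality; as written your chain of equalities presupposes what is to be shown, though the intended argument is clear). The paper's route buys simplicity at the operator level — each generator of $\hat{\NN}_R$ is exactly one field operator — at the cost of the small test-function construction; your route avoids that construction but leans on the delicate affiliation step. That step is correct but deserves one more sentence: an element $X$ of the commutant need not preserve the finite-particle core $\DD$ itself, only the domains of the closures of the two summands, so you must use that $\bar A+\bar B$ on $\dom\bar A\cap\dom\bar B$ is a symmetric extension of $(A+B)|_{\DD}$ and hence contained in its (self-adjoint) closure before commuting $X$ through. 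With that spelled out, your proof is complete.
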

%%===================================================
\begin{proof}
	The equivalence $(\NN_R,U_+\ot U_-,\Om_+\ot\Om_-)\cong (\hat{\NN}_R,U_+\ot U_-,\Om_+\ot\Om_-)$ was established already, and by construction of $V$, we have $V(U_+(x_-)\ot U_-(x_+))V^*=U(x)$ and $V\Om_+\ot\Om_-=\Om$. Hence the claim follows once we have shown $V\hat {\NN}_RV^*=\M_{R,0}$.

	By \eqref{eq:Nhat0} and \eqref{eq:M0pm-fields}, $\hat{\NN}_R$ is generated by the bounded functions of the field operators $S_R^*(\phi_{0,+}(f_+)\ot1)S_R$ and $S_R(1\ot\phi_{0,-}(f_-))S_R^*$, $f_\pm\in\Ss(\Rl_+)$. Conjugating with $V$, we have
	\begin{align*}
		VS_R^*\big(\phi_{0,+}(f_+)\ot1\big)S_RV^*
		&=
		\hat{S}_R^*\big(\ad(\fti_+|_{\Rl_+})+a(\widetilde{\overline{f_+}}|_{\Rl_+})\big)\hat{S}_R
		\\
		&=
		\ad_R(\fti_+|_{\Rl_+})+a_R(\widetilde{\overline{f_+}}|_{\Rl_+})
		\,,
	\end{align*}
	where in the second step, we have used Proposition~\ref{prop:CreatAnniEquiv}, which also holds for the creation operators by taking adjoints. Given $f_+$ which is the derivative of a function in $\Ss(\Rl_+)$, we find $f$ which is the derivative in $x_+$ direction of some function in $\Ss(W)$ such that $f_+$ is recovered from $f$ by \eqref{eq:fpm} and $f_-=0$ (namely, one can take the product of $f_+$ (which is a function of $x_-$) and a function of $x_+$ with integral one). In this situation, $f^+=\fti_+|_{\Rl_+}$, $\overline{f^-}=\widetilde{\overline{f_+}}|_{\Rl_+}$, and thus
	\begin{align*}
		VS_R^*\big(\phi_{0,+}(f_+)\ot1\big)S_RV^*
		=
		\phi_{R,0}(f)
		\,.
	\end{align*}
	As all vectors of finite particle number are analytic for these field operators, this equivalence also holds for their associated unitaries $e^{i\phi_{R,0}(f)}$ and $e^{i\phi_{0,+}(f_+)}$, and thus we have the inclusion $VS_R^*(\M_{0,+}\ot 1)S_RV^*\subset\M_{R,0}$.

	Similarly, for the other light ray we obtain
	\begin{align*}
		VS_R\big(1\ot \phi_{0,-}(f_-)\big)S_R^*V^*
		&=
		\ad_R(\fti_-|_{\Rl_-})+a_R(\widetilde{\overline{f_-}}|_{\Rl_-})
		=
		\phi_{R,0}(f)
	\end{align*}
	for suitably chosen $f\in\Ss(W)$, and hence $VS_R(1\ot\M_{0,-})S_R^*V^*\subset\M_{R,0}$. Thus $V\hat{\NN}_RV^*\subset\M_{R,0}$. As $\Om$ is cyclic and separating for both $V\hat{\NN}_R V^*$ and $\M_{R,0}$, and their modular groups w.r.t. $\Om$ coincide, $\Delta_{V\hat{\NN}_RV^*}^{it}=V\Delta_\ot^{it}V^*=\Delta_{\M_{R,0}}^{it}$, the equality of von Neumann algebras $V\hat{\NN}_RV^*=\M_{R,0}$ follows by Takesaki's theorem \cite{Takesaki:2003}
(see \cite[Theorem A.1]{Tanimoto:2011-2} for an explicit application).
\end{proof}
%%===================================================

Recall that by construction, $\NN_R$ \eqref{eq:NR0} depends on $R$ only via the symmetric inner function $R^2$, i.e. $\NN_{R_1}=\NN_{R_2}$ if $R_1^2=R_2^2$. By the equivalences
\begin{align*}
	(\NN_R,U_+\ot U_-,\Om_+\ot \Om_-)
	\cong
	(\hat{\NN}_R,U_+\ot U_-,\Om_+\ot \Om_-)
	\cong
	(\M_{R,0},U,\Om)
	\,,
\end{align*}
this also implies $(\M_{R_1,0},U,\Om)\cong(\M_{R_2,0},U,\Om)$ if $R_1^2=R_2^2$.

%\input{differentrootsdontmatter2}
%%
%% last change: 11.09.2012, 10:20 - Jan
%% 11.09: eliminated one overful hbox, comma added
%% 04.09: minimal changes in formulation, eliminated some typos

%%===================================================
\section{Structure of massive deformations}\label{section:m>0}
%%===================================================

The analysis in the previous section resulted in particular in two equivalence properties
of the massless deformed models: On the one hand, the two deformed Borchers triples
$(\M_{R,0},U,\Om)\cong(\NN_{R,0},U_+\ot U_-,\Om_+\ot \Om_-)$ depend only on
the symmetric inner function $\varphi=R^2$, i.e. choosing a different root of $\varphi$
results in an equivalent model. On the other hand, the deformed and undeformed (chiral)
fields are unitarily equivalent. This equivalence however depends on the light ray,
and thus the triples $(\M_{R,0},U,\Om)$ and $(\M_{1,0},U,\Om)$ are not equivalent for general roots $R\in\R$
(the operator $S_{R^2} = S_\varphi$ appears as the S-matrix, an invariant of Borchers triple \cite{Tanimoto:2011-1, DybalskiTanimoto:2010}).

In this section, we show that the first property also holds in the massive case, whereas the second one only holds in a weaker sense which is specified below.

\subsubsection*{Independence of the choice of root}

We begin with a preparatory lemma.
%%===================================================
\begin{lemma}\label{lemma:RootEquivalence}
	Let  $m\geq0$ and $r\in\R$ be a root of the (trivial) symmetric inner function $\varphi(t)=1$. Then the operator
	\begin{align}\label{eq:Yr}
		[Y_r\Psi]_n(p_1,\ldots,p_n)
		:=
		\prod_{1\leq i<j\leq n}
		r_m(p_i,p_j)
		\cdot
		\Psi_n(p_1,\ldots,p_n)
	\end{align}
	is a well-defined unitary on $\Hil$ which commutes with the representation $U$, leaves $\Om$ invariant, and satisfies
	\begin{align}\label{eq:UnitaryEquivalenceOfFields}
		Y_{r}\phi_{R,m}(f)Y_{r}^*\Psi
		=
		\phi_{r\cdot R,m}(f)\Psi
		\,,\qquad
		f\in\Ss(\Rl^2)\,,\;R\in\R\,,\;\Psi\in\DD
		\,.
	\end{align}
\end{lemma}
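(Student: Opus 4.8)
The plan is to verify the three claimed properties of $Y_r$ — unitarity, compatibility with $U$ and $\Om$, and the intertwining relation \eqref{eq:UnitaryEquivalenceOfFields} — essentially by direct computation on Fock space, using the symmetry properties of a root $r$ of the trivial inner function, namely $\overline{r(t)}=r(t)^{-1}=r(-t)$ and $r(t)^2=1$ a.e., which translate into $r_m(q,p)=r_m(p,q)^{-1}=\overline{r_m(p,q)}$ for almost all $p,q$. First I would check that $Y_r$ is well-defined and unitary: each factor $r_m(p_i,p_j)$ has modulus one a.e., so the product over $1\le i<j\le n$ is a unimodular function, multiplication by which is unitary on each $n$-particle space; one must also observe that this product is totally symmetric under permutations of $(p_1,\dots,p_n)$ — here one uses $r_m(p_j,p_i)=r_m(p_i,p_j)^{-1}$ together with the fact that the product ranges over unordered pairs, so $Y_r$ preserves the symmetric subspace $\Hil_n$. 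That $Y_r\Om=\Om$ is immediate since the empty product is $1$. For commutation with $U(x)=\Gamma(U_1(x,0))$: translations act by the multiplier $e^{i(x_0\om_m(p_k)-x_1 p_k)}$ on each variable separately, hence commute with any multiplication operator of the above form; so $[Y_r,U(x)]=0$.

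**The substantive step** is the intertwining relation \eqref{eq:UnitaryEquivalenceOfFields}. I would prove it at the level of the deformed annihilation operators $a_R(\xi)$, whose explicit action is given in \eqref{eq:a_R-explicit}, and then pass to $\phi_{R,m}(f)=\ad_R(f^+)+a_R(\overline{f^-})$ by taking adjoints. Concretely, I would compute $[Y_r a_R(\xi) Y_r^*\Psi]_n(p_1,\dots,p_n)$ on a vector $\Psi$ of finite particle number in the domain $\DD$: applying $Y_r^*$ produces $\prod_{i<j}\overline{r_m(p_i,p_j)}$ on the $(n{+}1)$-particle component evaluated at $(q,p_1,\dots,p_n)$; then $a_R(\xi)$ integrates out the variable $q$ against $\overline{\xi(q)}\prod_{k=1}^n R_m(q,p_k)$ and leaves $\prod_{1\le i<j\le n}\overline{r_m(p_i,p_j)}$ times $\prod_{1\le i<j\le n}\overline{r_m(p_i,p_j)}^{-1}$ from the $q$-independent pairs cancelling — wait, more carefully: after the $q$-integration the surviving $r$-factors are those pairs $\{i,j\}\subset\{1,\dots,n\}$, contributing $\overline{r_m(p_i,p_j)}$, plus the $n$ pairs $\{q,p_k\}$ each contributing $\overline{r_m(q,p_k)}$; finally $Y_r$ contributes $\prod_{i<j}r_m(p_i,p_j)$, which cancels the $\overline{r_m(p_i,p_j)}$ from $Y_r^*$. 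What remains is $\int \frac{dq}{\om_m(q)}\,\overline{\xi(q)}\,\prod_{k=1}^n \overline{r_m(q,p_k)}\,R_m(q,p_k)\,\Psi_{n+1}(q,p_1,\dots,p_n)$, and using $\overline{r_m(q,p_k)}=r_m(q,p_k)=r_m(p_k,q)^{-1}$ one identifies $\overline{r_m(q,p_k)}R_m(q,p_k)=(r\cdot R)_m(q,p_k)$ — provided the argument convention matches, i.e. $r_m(q,p)R_m(q,p)=(rR)_m(q,p)$, which holds since both are evaluated at the same point $\tfrac12(\om_m(p)q-\om_m(q)p)$ (resp.\ the massless prescription \eqref{eq:R0}) and multiplication is pointwise. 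Hence $[Y_r a_R(\xi)Y_r^*\Psi]_n=[a_{rR}(\xi)\Psi]_n$, i.e.\ $Y_r a_R(\xi)Y_r^* = a_{rR}(\xi)$ on $\DD$; taking adjoints gives $Y_r\ad_R(\xi)Y_r^* = \ad_{rR}(\xi)$, and combining these for $\xi=f^+$ and $\xi=\overline{f^-}$ yields \eqref{eq:UnitaryEquivalenceOfFields}.

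**The main obstacle** I anticipate is bookkeeping rather than conceptual: one has to be careful that the $r$-factors attached to the integration variable $q$ in the definition of $a_R$ are of the form $R_m(q,p_k)$ (argument order $q$ first), whereas the $Y_r$-factors involve unordered pairs, so sorting out exactly which $\overline{r_m}$'s cancel against the $Y_r$ prefactor and which combine with the $R_m$'s requires writing the product $\prod_{1\le i<j\le n+1}$ over the enlarged index set $\{q,p_1,\dots,p_n\}$ and splitting off the $n$ pairs containing $q$. One must also confirm that $rR$ is again a root of a symmetric inner function (so $a_{rR}$ makes sense): indeed $\overline{(rR)(t)}=\overline{r(t)}\,\overline{R(t)}=r(t)^{-1}R(t)^{-1}=(rR)(t)^{-1}$, $(rR)(-t)=r(-t)R(-t)=r(t)^{-1}R(t)^{-1}=(rR)(t)^{-1}$, and $(rR)(t)^2=r(t)^2R(t)^2=R(t)^2=\varphi(t)$, so $rR\in\R$ with the same associated inner function as $R$. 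A minor technical point is that $\DD$ (the common invariant domain of finite particle number) is preserved by $Y_r$, $Y_r^*$, $a_R$, and $\ad_R$, so all the manipulations above are legitimate on $\DD$; this is routine given the explicit formulas.
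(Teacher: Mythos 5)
Your proposal is correct and follows essentially the same route as the paper: unitarity of $Y_r$ via unimodularity and the symmetry of $r_m$ (which, as in the paper, ultimately rests on $r$ being $\pm 1$-valued and hence real, so that $r_m(p_j,p_i)=\overline{r_m(p_i,p_j)}=r_m(p_i,p_j)$), trivial commutation with $U$ and invariance of $\Om$, and then the same direct computation of $Y_r a_R(\psi)Y_r^*$ on finite-particle vectors in which the $q$-independent pair factors cancel and the pairs containing $q$ combine pointwise with $R_m$ to produce $(r\cdot R)_m$, followed by taking adjoints and assembling the field. Your additional checks that $rR\in\R$ and that the relevant domains are preserved are routine and consistent with what the paper leaves implicit.
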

%%===================================================
\begin{proof}
	As $r$ is a root of $1$, it takes only the values $\pm1$ and is in particular real. Hence $r_m(p_j,p_i)=\overline{r_m(p_i,p_j)}=r_m(p_i,p_j)$ is symmetric and thus the product in \eqref{eq:Yr} preserves the totally symmetric subspace of $L^2(\Rl^n)$, and $Y_r$ defines a unitary on the Bose Fock space $\Hil$.

	It is clear that $Y_r$ commutes with translations and leaves $\Om$ invariant. To establish \eqref{eq:UnitaryEquivalenceOfFields}, we first calculate for an annihilation operator $a_{R}(\psi)$, $\psi\in\Hil_1$,
	\begin{align*}
		[Y_r&a_{R}(\psi)Y_r^*\Psi]_n(p_1,\ldots,p_n)
		\\
		&=
		\sqrt{n+1}
		\prod_{i<j}r_m(p_i,p_j)
		\int \frac{dq}{\om_m(q)}\,\overline{\psi(q)}\,
		\prod_{k=1}^n R_m(q,p_k)
		\cdot
		[Y_r^*\Psi]_{n+1}(q,p_1,\ldots,p_n)
		\\
		&=
		\sqrt{n+1}
		\prod_{i<j} |r_m(p_i,p_j)|^2
		\int \frac{dq}{\om_m(q)}\,\overline{\psi(q)}\,
		\prod_{k=1}^n\Big( r_m(q,p_k) R_m(q,p_k)\Big)
		\Psi_{n+1}(q,p_1,\ldots,p_n)
		\\
		&=
		[a_{r\cdot R}(\psi)\Psi]_n(p_1,\ldots,p_n)
		\,.
	\end{align*}
	Thus $Y_r a_{R}(\psi)Y_r^*\Psi=a_{r\cdot R}(\psi)\Psi$, and by taking adjoints, we also find $Y_r\ad_R(\psi)Y_r^*\Psi=\ad_{r\cdot R}(\psi)\Psi$. As $\phi_{R,m}(f)=\ad_{R}(f^+)+a_{R}(\overline{f^-})$, the claimed equivalence \eqref{eq:UnitaryEquivalenceOfFields} follows.
\end{proof}
%%===================================================

With this lemma, it is now easy to show that the Borchers triple $(\M_{R,m},U,\Om)$ is independent of the choice of root up to equivalence.

%%===================================================
\begin{proposition}\label{proposition:ChoiceOfRootDoesntMatter}
	Let $R_1,R_2\in\R$ be roots of the same symmetric inner function $R_1^2=R_2^2$. Then $(\M_{R_1,m},U,\Om)\cong(\M_{R_2,m},U,\Om)$, $m\geq0$.
\end{proposition}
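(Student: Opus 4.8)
The plan is to reduce Proposition~\ref{proposition:ChoiceOfRootDoesntMatter} to Lemma~\ref{lemma:RootEquivalence}. The crucial observation is that if $R_1,R_2\in\R$ satisfy $R_1^2=R_2^2$, then the quotient $r:=R_1\cdot R_2^{-1}=R_1\cdot\overline{R_2}$ is itself an element of $\R$ which is a root of the trivial symmetric inner function $\varphi(t)=1$. Indeed, one checks directly from Def.~\ref{definition:FunctionFamilies}~\refitem{item:Roots}: $\overline{r(t)}=\overline{R_1(t)}\,\overline{\overline{R_2(t)}}=R_1(t)^{-1}R_2(t)=r(t)^{-1}$, similarly $r(-t)=R_1(-t)R_2(-t)^{-1}=R_1(t)^{-1}R_2(t)^{-1\,\cdot(-1)\cdot(-1)}$\ldots more cleanly, $r(-t)=R_1(-t)\overline{R_2(-t)}=\overline{R_1(t)}\cdot R_2(t)^{-1}\cdot$(conjugate relations again)$=r(t)^{-1}$, and $r(t)^2=R_1(t)^2R_2(t)^{-2}=\varphi(t)\varphi(t)^{-1}=1$ a.e. So $r\in\R$ is a root of $1$.

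First I would record this computation and note $R_2\cdot r=R_2\cdot R_1\cdot R_2^{-1}=R_1$ (the functions are scalar-valued, so they commute), hence by Lemma~\ref{lemma:RootEquivalence}, applied with this $r$ and the root $R_2$, the unitary $Y_r$ of \eqref{eq:Yr} satisfies $Y_r\,\phi_{R_2,m}(f)\,Y_r^*\Psi=\phi_{r\cdot R_2,m}(f)\Psi=\phi_{R_1,m}(f)\Psi$ for all $f\in\Ss(\Rl^2)$ and all $\Psi$ in the domain $\DD$ of finite particle number. Since for real $f$ the vectors in $\DD$ are analytic for $\phi_{R_2,m}(f)$ (and for $\phi_{R_1,m}(f)$), this intertwining relation exponentiates to $Y_r\,e^{i\phi_{R_2,m}(f)}\,Y_r^*=e^{i\phi_{R_1,m}(f)}$ for all $f\in\Ss_\Rl(W)$. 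Taking the von Neumann algebra generated by these unitaries and using \eqref{eq:MRm}, one gets $Y_r\,\M_{R_2,m}\,Y_r^*=\M_{R_1,m}$.

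It then remains to package this as an equivalence of Borchers triples in the sense of Def.~\ref{definition:WedgeAlgebra}. Lemma~\ref{lemma:RootEquivalence} already provides that $Y_r$ commutes with the translation representation $U$ and fixes the vacuum $\Om$, i.e. $Y_rU(x)Y_r^*=U(x)$ for all $x\in\Rl^2$ and $Y_r\Om=\Om$. Combined with $Y_r\M_{R_2,m}Y_r^*=\M_{R_1,m}$, this is precisely the statement that the unitary $V:=Y_r$ implements the equivalence $(\M_{R_2,m},U,\Om)\cong(\M_{R_1,m},U,\Om)$, proving the proposition.

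I do not anticipate a serious obstacle here; the proof is essentially bookkeeping once Lemma~\ref{lemma:RootEquivalence} is in hand. The only mildly delicate point is the verification that $r=R_1\overline{R_2}$ genuinely lies in $\R$ and is a root of $1$ — one must be careful that the defining relations of Def.~\ref{definition:FunctionFamilies}~\refitem{item:Roots} are multiplicative in the right way (they are, because all functions involved are $\Cl$-valued and of modulus one a.e., so products and conjugates behave as expected), and that "a.e." statements are preserved under these finitely many algebraic operations (they are). A second routine point is the passage from the intertwining of the unbounded field operators on $\DD$ to the intertwining of the generating unitaries and hence of the von Neumann algebras; this is the same analyticity argument already used in the proof of Theorem~\ref{thm:Algeq}, and can simply be cited as such.
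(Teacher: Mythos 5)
Your proposal is correct and follows essentially the same route as the paper: form the quotient $r=R_1R_2^{-1}$, observe it is a root of the trivial inner function $1$, and apply Lemma~\ref{lemma:RootEquivalence} together with the analyticity of vectors in $\DD$ to pass from the intertwining of the fields to that of the generating unitaries and the von Neumann algebras, with $Y_r$ commuting with $U$ and fixing $\Om$. The only blemish is the garbled intermediate step in your verification that $r(-t)=r(t)^{-1}$ (the displayed chain $\overline{R_1(t)}\cdot R_2(t)^{-1}$ is not yet $r(t)^{-1}$); the clean computation is $r(-t)=R_1(-t)R_2(-t)^{-1}=R_1(t)^{-1}\bigl(R_2(t)^{-1}\bigr)^{-1}=r(t)^{-1}$, and this does not affect the argument.
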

%%===================================================
\begin{proof}
	As $R_1^2=R_2^2$, the function $r(t):=R_1(t) R_2(t)^{-1}$ is a root of $1$ as in  Lemma~\ref{lemma:RootEquivalence}, i.e. we have $Y_r\phi_{R_2,m}(f)Y_{r}^*=\phi_{R_1,m}(f)$ \eqref{eq:UnitaryEquivalenceOfFields} for any $f\in\Ss(\Rl^2)$. But these field operators have the dense subspace $\DD$ of vectors of finite particle number as entire analytic vectors \cite{Lechner:2011}, and $Y_r\DD=\DD$. Hence the equivalence \eqref{eq:UnitaryEquivalenceOfFields} lifts to the unitaries $e^{i\phi_{R_k,m}(f)}$, $k=1,2$, $f\in\Ss_\Rl(\Rl^2)$, and the von Neumann algebras they generate, $Y_r\M_{R_1,m}Y_r^*=\M_{R_2,m}$. Since $Y_r$ also commutes with $U$ and leaves $\Om$ invariant, the claimed equivalence of Borchers triples follows.
\end{proof}
%%===================================================

As mentioned in Section 1, this result states that within the class of Borchers triples considered here, the inverse scattering problem for the two-particle S-matrix $R^2$ has a unique solution up to unitary equivalence.
For massless asymptotically complete nets, this uniqueness is known in the stronger form that the wave S-matrix and
the free net give an explicit formula to construct the deformed Borchers triple \cite{Tanimoto:2011-1}.
It is an interesting open problem to find its massive counterpart.

\subsubsection*{Equivalence at fixed momentum}

We now come to the discussion of equivalences between deformed and undeformed field operators. In the massless case, this equivalence can be expressed as, $R\in\R$,
\begin{align*}
	a_R(\xi)
	&=
	\begin{cases}
			\hat{S}_R^*a(\xi)\hat{S}_R & \supp\xi\subset\Rl_+\\
			\hat{S}_Ra(\xi)\hat{S}_R^* & \supp\xi\subset\Rl_-
	\end{cases}
	\,,\qquad
	m=0
	\,.
\end{align*}
For $m>0$, the Lorentz group acts transitively on the upper mass shell, so that there is no invariant distinction between its left and right branch. However, we still have an equivalence of the above form {\em at sharp momentum}. Recall that for $p\in\Rl$, the annihilator $a(p)$ is a well-defined unbounded operator on the dense domain $\DD_0\subset\DD$ of vectors $\Psi\in\DD$ of finite particle number with continuous wave functions $\Psi_n\in C(\Rl^n)$, $n\in\Nl$.

To implement this equivalence, we define an operator-valued function $\Rl\ni p\mapsto\hat{S}_{R,m}(p)\in\U(\Hil)$ by
\begin{align}
	[ \hat{S}_{R,m}(p) \Psi ]_n(p_1,\ldots,p_n)
	:=
	\prod_{1\leq i<j\leq n} R((p_i+p_j)\wedge_m p)
	\cdot
	\Psi_n(p_1, \ldots, p_n)\,,
\end{align}
where $p\wedge_m q:=\frac{1}{2}(\om_m(q)p-\om_m(p)q)$. Note that in case the root $R$ is continuous, one has $\hat{S}_{R,m}(p)\DD_0=\DD_0$.

Using $\overline{R(t)}=R(t)^{-1}=R(-t)$, the definition of $R_m$ \eqref{eq:R_m}, and $(p+q)\wedge_m p=q\wedge_m p$, we then get
\begin{align*}
	[&\hat{S}_{R,m}(p)a(p)\hat{S}_{R,m}(p)^* \Psi]_n (p_1, .., p_n) \\
        &=
	\sqrt{n+1}\,\prod_{i<j}^n  R((p_i+p_j)\!\wedge_m p)
	\cdot
	[\hat{S}_{R,m}(p)^* \Psi]_{n+1}(p,p_1, .., p_n) \\
	&=
	\sqrt{n+1}\,
	\prod_{i<j}^n \Big( R((p_i+p_j)\wedge_m p) \overline{R((p_i+p_j)\wedge_m p)}\Big)
	\cdot
	\prod_{k=1}^n
	\overline{R((p+p_k)\wedge_m p)}
	\cdot
	\Psi_{n+1}(p,p_1, ..., p_n) \\
	&=
	\sqrt{n+1}\,
	\prod_{k=1}^n
	\overline{R(p_k \wedge_m p)}
	\cdot
	\Psi_{n+1}(p,p_1, ..., p_n) \\
	&=
	\sqrt{n+1}\,
	\prod_{k=1}^n
	R_m(p,p_k)
	\cdot
	\Psi_{n+1}(p,p_1, ..., p_n)\\
	&=
	[a_R(p)\Psi]_n(p_1,...,p_n)
	\,,
\end{align*}
where the last equality follows from comparison with \eqref{eq:a_R-explicit}. We thus have on $\DD_0$
\begin{align}
	\hat{S}_{R,m}(p)a(p) \hat{S}_{R,m}(p)^*
	=
	a_{R,m}(p)
	\,.
\end{align}
It should be noted that there is actually a big freedom in the choice of $\hat{S}_{R,m}$
with this property, as it is only the adjoint action of $\hat{S}_{R,m}(p)$ on $a(p)$ with the same 
momentum $p$ that matters in the end. One manifestation of this freedom is the fact that $\hat{S}_{R,0}(p)$
for $p>0$ does not agree with $\hat{S}_R^*$ \eqref{eq:Shat}, whereas their adjoint action on $a(p)$ does.
For $m>0$, another implementation of the equivalence is
\[
(\hat{S}_{R,m}(p)\Psi)_n(p_1,\ldots p_n) =
\prod_{i<j} \overline{R({\rm sgn}(\max(p_j,p_i)-p)
|p_i\wedge_m p_j|)}\cdot\Psi_n(p_1,\ldots, p_n)
\,,
\]
where the sign function sgn is defined with sgn$(0):=-1$. This can be checked by a computation analogous to the previous one. Observe that if $p$ is sufficiently large,
this coincides with the root of the two-particle S-matrix \cite{Lechner:2011}, and for $p$ sufficiently small with its inverse.
Hence this latter implementation is analogous to the massless case, where
the deformation is given exactly by the S-matrix $\hat{S}_R$ and its adjoint.

By (formally) taking adjoints one gets the same relation between $a^\dagger(p)$ and
$a^\dagger_{R,m}(p)$. However, even when making this adjoint rigorous (e.g. in the sense
of quadratic forms) one cannot expect to get an equivalence of the Fourier transform
of the deformed and undeformed field at sharp $p$. One has to keep in mind that the splitting into
chiral components is \emph{not} a splitting of the field according to momentum transfer
but related to a split of the one-particle Hilbert space into positive and negative momentum
parts. Thus creation and annihilation operators appear either both with positive or both
with negative momentum, so both are transformed with $\hat{S}_{R,m}$ or $\hat{S}_{R,m}^*$.
For the massive case this mechanism is not available and therefore the relations between deformed and undeformed creation and annihilation operators will not yield a corresponding relation between the fields.

To conclude, the structure of the wedge algebra is deformed in a very transparent manner in the chiral situation \eqref{eq:NR0}, but not for $m>0$, where one has to rely on the use of generating fields. This observation is to some extent in parallel with the simpler structure of the wave S-matrix in the chiral case in comparison to the many particle S-matrix in the massive case, and deserves further investigation.

%%===================================================

%%== References ========================================
\footnotesize{
%	\bibliography{Bibtex-Database.bib}
%	\bibliographystyle{my-alpha3}

}
%%===================================================

\end{document}